\newtheorem{thm}{Theorem}
\newtheorem{lemma}[thm]{Lemma}
\newtheorem{proposition}[thm]{Proposition}
\theoremstyle{definition}
\newtheorem{defn}[thm]{Definition}
\newcommand{\abs}[1]{\left\vert {#1} \right\vert}
\newcommand{\set}[1]{\left\{ {#1} \right\}}
\newcommand{\prt}[1]{\left( {#1} \right)}
\newcommand{\scal}[1]{\left< {#1} \right>}
\newcommand{\dv}[2]{\frac{\mathrm d{#1}}{\mathrm d{#2}}}
\newcommand{\pd}[2]{\frac{\partial{#1}}{\partial{#2}}}
\newcommand{\setC}{{\mathbb C}}
\newcommand{\sR}{{\mathbb R}}
\newcommand{\sC}{{\mathbb C}}
\newcommand{\lequi}{\ \ \Longleftrightarrow\ \ }
\newcommand{\A}{\mathcal{A}}
\renewcommand{\H}{\mathcal{H}}
\newcommand{\J}{\mathcal{J}}
\newcommand{\M}{\mathcal{M}}
\newcommand{\N}{\mathcal{N}}
\newcommand{\T}{\mathcal{T}}
\newcommand{\C}{\mathcal{C}}
\newcommand{\Sw}{\mathcal{S}}
\DeclareMathOperator{\Span}{span}
\DeclareMathOperator{\diag}{diag}
\DeclareMathOperator{\tr}{tr}
\DeclareMathOperator{\sgn}{sgn}
\newcommand{\DD}{\mathcal{D}}           
\newcommand{\ox}{\otimes}                  
\newcommand{\Dslash}{{\DD \mkern-11.5mu/\,}} 
\newcommand{\sH}{{\mathbb H}}
\newcommand{\x}{\times}
\newcommand{\vc}{\vcentcolon =}
\newcommand{\wA}{\widetilde{\mathcal{A}}}
\begin{document}

\title{\vspace{-0cm}{\bf Causality in noncommutative two-sheeted space-times}\vspace{0.5cm}}

\author{Nicolas Franco$^{a,b,}\footnote{ Current address: Department of mathematics, University of Namur, Rempart de la Vierge 8, 5000 Namur, Belgium}$ \and Micha{\l} Eckstein$^{c,b}$\vspace{0.3cm}}

\date{{\footnotesize $^a$ Copernicus Center for Interdisciplinary Studies{\footnote{ supported by a grant from the John Templeton Foundation}},\\
 ul. S{\l}awkowska 17, 31-016 Krak\'ow, Poland\\[0.2cm]
 $^b$ Faculty of Mathematics and Computer Science, Jagellonian University,\\
 ul. {\L}ojasiewicza 6, 30-348 Krak\'ow, Poland\\[0.2cm]
 $^c$  Faculty of Physics, Astronomy and Applied Computer Science, Jagiellonian University,\\
 ul. prof. Stanis{\l}awa {\L}ojasiewicza 11, 30-348 Krak\'ow, Poland\\[0.2cm]
 nicolas.franco@math.unamur.be \qquad michal.eckstein@uj.edu.pl}}

\maketitle

\begin{abstract}
We investigate the causal structure of two-sheeted space-times using the tools of Lorentzian spectral triples. We show that the noncommutative geometry of these spaces allows for causal relations between the two sheets. The computation is given in details when the sheet is a 2- or 4-dimensional globally hyperbolic spin manifold. The conclusions are then generalised to a point-dependent distance between the two sheets resulting from the fluctuations of the Dirac operator.
\end{abstract}

\section{Introduction}

Among the pseudo-Riemannian manifolds, the Lorentzian ones form a distinguished class because they can accommodate a causal structure. The latter has very deep consequences for physical models as it sets fundamental restrictions on the evolution of physical processes. On the mathematical side, the causal structure on a Lorentzian manifold $\M$ induces a partial order relation on the set of points of $\M$. The properties of this order have been studied by several authors (see for instance \cite{nachbin,Bes09,MinguzziCompactification,MinguzziUtilities}).

It turns out that the notion of a partial order can be generalised to the realm of noncommutative spaces \cite{Bes09}. This is to be understood as the existence of a partial order relation on the space of states of a, possibly noncommutative, $C^*$-algebra $A$. Via the Gelfand-Naimark theorem, it can be shown that a noncommutative partial order is equivalent to a usual partial order on $\mathrm{Spec}(A)$ whenever $A$ is commutative.

Inspired by these results, we proposed in \cite{CQG2013} an extended notion of a causal order suitable for noncommutative geometries (see also \cite{CC2014} for a less formal review). Our definition is embedded in the realm of Lorentzian spectral triples \cite{Stro} and recovers the classical causal structure for globally hyperbolic manifolds \cite[Theorem 7]{CQG2013}. We note that there exists an alternative approach based on the same ideas \cite{Bes09,Bes13,Bes14}, but focusing on more general orders without any specific relation to the metric (so not related to any Dirac operator).

To explore the properties of the proposed noncommutative causal structure we considered in \cite{SIGMA2014} a toy-model based on a noncommutive spectral triple $\big( \Sw(\sR^{1,1}) \ox M_2(\sC), L^2(\sR^{1,1},S) \ox \sC^2, \Dslash \ox 1 + \gamma \ox \diag\{d_1,d_2\} \big)$. It turned out that the triple at hand has a well-defined and highly non-trivial causal structure. It exhibits a number of interesting and unexpected features leading to constraints on the motion not only in the space-time component, but also in the internal space of the model. However, due to the complexity of the computations, we were not able to generalise our results to higher-dimensional, curved, space-times.

In this paper we investigate another toy-model --- a two-sheeted space-time --- based on a product of a globally hyperbolic space-time $\M$ and a finite spectral triple $\big( \A_F, \H_F, D_F \big)$, with $\A_F = \sC \oplus \sC$, $\H_F = \sC^2$ and $D_F = \left( \begin{smallmatrix} 0 & m \\ m^* & 0 \end{smallmatrix} \right)$. Since the algebra $\A_F$ is a commutative one and has only two pure states, the total space of physical states is isomorphic (at the set-theoretic level) to $\M \sqcup \M$. However, the resulting product geometry is non-trivial because the off-diagonal Dirac operator $D_F$ provides a link between the two sheets.

For this particular model we establish a procedure of determining the causal structure with $\M$ being a general even-dimensional globally hyperbolic manifold. We apply it explicitly in dimensions 2 and 4. Moreover, the adopted technique allows us to generalise the results to the case when the mass parameter $m$ is replaced with a complex scalar field.

The choice of the $\sC \oplus \sC$ model is also motivated on physical grounds. The noncommutative Standard Model of particle physics, based on the algebra $\A_{\mathrm{SM}} = \sC \oplus \sH \oplus M_3(\sC)$, is often described as a two-sheeted space-time \cite{SakellariadouDoubling}. Indeed, the space of pure states of the electroweak sector $\sC \oplus \sH$ consists of two points and although $P(M_3(\sC)) \cong \sC P^2$, all of its points are separated by an infinite distance as the Dirac operator $D_F$ commutes with the $M_3(\sC)$ part of the algebra \cite[Remark 5.1]{MartinettiMoyal}. Our results on the $M_2(\sC)$ model suggest that whenever two states are separated by an infinite distance, no causal relation between them is possible. Hence, the chosen finite algebra is a good toy-model for the full Standard Model based on $\A_{\mathrm{SM}}$. In this paper we will focus on the mathematical details of the model, postponing the discussion of the physical interpretation to a forthcoming one \cite{F10}.

The paper is organised as follows: In the next section we recollect the basic definitions and properties of Lorentzian spectral triples and noncommutative causal structures. In Section \ref{nc2s} we describe the features of the two-sheeted model and present the main result of the paper describing its causal structure. We work out in details the cases of space-time dimensions 2 and 4 in Sections \ref{sec2dim} and \ref{sec4dim} respectively. In Section \ref{secscalar} we study the impact of the inner fluctuations of the Dirac operator on the causal structure. We conclude with some general remarks on the applicability of the developed techniques to other almost commutative models.


\section{Causality in Lorentzian noncommutative geometry}\label{sec2}

As a prelude to the introduction of causality in noncommutative geometry, we need to recollect some elements of the theory of Lorentzian spectral triples. The usual definition of a spectral triple, as introduced by Connes \cite{C94,MC08}, allows only to deal with (typically compact) Riemannian spaces, while the notion of causality requires non-compact Lorentzian spaces. The first generalisation of the axioms to pseudo-Riemannian signatures was done in \cite{Stro} and led to various definitions \cite{Pas,F4,Verch11,Rennie12,F5}, which have however a common basis -- the Krein space. The theory of pseudo-Riemannian spectral triples is still very recent and undergoes an intense development.

Below we present a rather restrictive definition of a Lorentzian spectral triple following our previous works \cite{CC2014,F5}. It has the advantage of guaranteeing a signature of Lorentzian type and allows to recover a globally hyperbolic manifold in the commutative case. The following axioms can also be considered as a particular case of all other existing approaches.\\

\begin{defn}
\label{deflost}
A~\emph{Lorentzian spectral triple} is given by the data $(\mathcal{A},\widetilde{\mathcal{A}},\H,D,\mathcal{J})$
with:
\begin{itemize}\itemsep=10pt
\item A~Hilbert space $\H$.
\item A~non-unital dense $*$-subalgebra $\mathcal{A}$ of a $C^*$-algebra, with a~faithful representation as bounded operators on
$\H$.
\item A~preferred unitisation $\widetilde{\mathcal{A}}$ of $\mathcal{A}$, which is also a dense $*$-subalgebra of a $C^*$-algebra,
with a~faithful representation as bounded operators on $\H$ and such that $\mathcal{A}$ is an ideal of
$\widetilde{\mathcal{A}}$.
\item An unbounded operator $D$, densely defined on $\H$, such that:
\begin{itemize}\itemsep=3pt
\item $\forall a\in\widetilde{\mathcal{A}}$,   $[D,a]$ extends to a~bounded operator on $\H$, \item $\forall
a\in\mathcal{A}$,   $a(1 + \scal{D}^2)^{-\frac 12}$ is compact, with $\scal{D}^2 = \frac 12 (D D^* + D^*
D)$.
\end{itemize} 
\item A~bounded operator $\mathcal{J}$ on $\H$ with $\mathcal{J}^2=1$, $\mathcal{J}^*=\mathcal{J}$, $[\mathcal{J},a]=0$,
$\forall a\in\widetilde{\mathcal{A}}$ and such that:
\begin{itemize}\itemsep=3pt
\item $D^*=-\mathcal{J} D \mathcal{J}$ on $\text{Dom}(D) = \text{Dom}(D^*) \subset \H$;
\item there exist a densely defined self-adjoint operator $\mathcal{T}$ with $\text{Dom}(\mathcal{T}) \cap \text{Dom}(D)$ dense in $\H$ and with $\left(1+ \mathcal{T}^2 \right)^{-\frac{1}{2}}\in
\widetilde{\mathcal{A}}$, and a positive element $N\in\widetilde\A$ such that $\mathcal{J}  = -N [D,\mathcal{T}]$.\\
\end{itemize}
\end{itemize}
We say that a Lorentzian spectral triple is \emph{even} if there exists a~$\mathbb Z_2$-grading $\gamma$ of $\H$ such that
$\gamma^*=\gamma$, $\gamma^2=1$, $[\gamma,a] = 0$ $\forall a\in\widetilde{\mathcal{A}}$, $\gamma
\mathcal{J} =- \mathcal{J} \gamma$ and $\gamma D =- D \gamma $.
\end{defn}

The role of the operator $\mathcal{J}$, called fundamental symmetry, is to turn the Hilbert space $\H$ into a~Krein space on which the operator $iD$ is essentially self-adjoint~\cite{Stro,Bog}. As proved in \cite{CC2014,F5}, the condition $\mathcal{J}=-N[D,\mathcal{T}]$ guarantees the Lorentzian signature. More precisely, if a pseudo-Riemannian spectral triple is constructed from a pseudo-Riemannian manifold $\M$, then the condition $\mathcal{J}=-N[D,\mathcal{T}]$ implies that the signature of the metric is Lorentzian and moreover, the metric on $\M$ splits (i.e.~$\M$ is diffeomorphic to $\sR \times \Sigma$, where $\Sigma$ is a Cauchy surface). 

Let us now consider a~locally compact complete globally hyperbolic Lorentzian manifold $\mathcal{M}$ of dimension $n$ with a~spin structure $S$. By a~complete Lorentzian manifold we understand the following: there exists a~spacelike reflection --- i.e.~an automorphism $r$ of the tangent bundle respecting $r^2=1$, $g(r\cdot,r\cdot) = g(\cdot,\cdot)$~--- such that $\mathcal{M}$ equipped with a~Riemannian metric $g^r(\cdot,\cdot) \vc g(\cdot,r\cdot)$ is complete in the usual Lebesgue sense. Then one can always construct a~commutative Lorentzian spectral triple in the following way~\cite{CC2014,F5}:
\begin{itemize}
\item $\H_\M = L^2(\mathcal{M},S)$ is the Hilbert space of square integrable sections of a spinor bundle over~$\mathcal{M}$ (using the positive definite inner product on the spinor bundle).
\item $D_\M = -i(\hat c \circ \nabla^S) = -i e^{\;\;\mu}_a\gamma^{a} \nabla^S_\mu = -i \widetilde\gamma^{\mu} \nabla^S_\mu$ is the Dirac operator associated
with the spin connection~$\nabla^S$ (the Einstein summation convention is in use, $e^{\;\;\mu}_a$
stand for vielbeins, $\gamma^a$ -- for the flat gamma matrices and $\widetilde\gamma^\mu$ -- for the curved ones).\footnote{Conventions
used in the paper are $(-,+,+,+,\cdots)$ for the signature of the metric and
$\{\gamma^a,\gamma^b\}=2\eta^{ab}$ for the flat gamma matrices, with $\gamma^0$ anti-Hermitian and
$\gamma^a$ Hermitian for $a>0$. The curved gamma matrices respect the same Hermicity conditions and the relations $\{\widetilde\gamma^\mu,\widetilde\gamma^\nu\}=2 g^{\mu\nu}$. The notation $\widetilde\gamma^\mu$ is used in order to avoid any confusion.} 
\item \label{AM} $\mathcal{A}_\M \subset C^\infty_0(\mathcal{M})$ and $\widetilde{\mathcal{A}}_{\M} \subset
C^\infty_b(\mathcal{M})$ with pointwise multiplication are some appropriate sub-algebras\footnote{A typical choice for e.g.~a Minkowski space is $\A = \mathcal{S}({\mathbb R}^{1,n-1})$ the algebra of Schwartz functions (rapidly decreasing at infinity together with all derivatives) and $\widetilde\A = \mathcal{B}({\mathbb R}^{1,n-1})$ (bounded smooth functions with all derivatives bounded).} of the algebra of
smooth functions vanishing at infinity and the algebra of smooth bounded functions respectively.
{}$\widetilde{\mathcal{A}}_\M$ must be such that $\forall\, a\in\widetilde{\mathcal{A}}$, $[D,a]$ extends to
a~bounded operator on $\H$.
The representation is given by standard multiplication operators on $\H$: $(\pi(a) \psi)(x) = a(x) \psi(x)$
for all $x \in \mathcal{M}$.
\item $\mathcal{J}_\M=i\gamma^0$, where $\gamma^0$ is the first flat gamma matrix.\\
\end{itemize}

For a globally hyperbolic Lorentzian manifold, there always exists a global smooth time function $\T$ on $\M$ such that the metric splits \mbox{$g = - N^2 d^2\T + g_\T$}, where $g_\T$ is a collection of Riemannian metrics defined on the Cauchy hypersurfaces at constant $\T$ and $N$ is the lapse function. In such a case, the fundamental symmetry can be written as $\mathcal{J}=i\gamma^0=-N[D,\mathcal{T}]$ using the usual property of the Dirac operator $[D,\T]=-ic(d\T)$, so this construction respects the axioms of Definition \ref{deflost}. If $n$ is even, the $\mathbb Z_2$-grading is given by the chirality element: $\gamma_\M = \pm i^{\frac{n}{2} +1} \gamma^0 \cdots \gamma^{n-1}.$\\

Recall now that given a $C^*$-algebra $A$ one defines $S(A)$ -- the space of states, i.e.~positive linear functionals (automatically continuous) of norm one. For any $C^*$-algebra $S(A)$ is a convex set for the weak-$*$ topology and thus has a distinguished subset of extremal points $P(A)$ -- these are called the pure states on $A$. If $\A$ is a $*$-subalgebra of $A$, then one can also define $S(\A) \vc \{\phi\vert_{\A}, \; \phi \in S(A)\}$ and $P(\A)$ accordingly. Moreover, if $\A$ is dense in $A$, we have $S(\A) \simeq S(A)$ and $P(\A) \simeq P(A)$. When $A = C_0(X)$ for a locally compact Hausdorff topological space, pure states $P(A)$ are in one-to-one correspondence with the points of $X$ via the Gelfand-Naimark theorem.

 With the help of the data of a Lorentzian spectral triple one can equip the algebra $\wA$ with a causal structure as follows \cite[Definitions 4 and 5]{CQG2013}:

\begin{defn}\label{defncauscone}
Let $\mathcal{C}$ be a convex cone of all Hermitian elements $a \in \widetilde{\mathcal{A}}$ respecting
\begin{equation}\label{constcaus}
\forall \; \phi\in\H,\qquad\scal{\phi,\mathcal{J}[D,a]\phi}\leq0,
\end{equation}
where $\scal{\cdot,\cdot}$ is the inner product on $\H$.
If the following condition is fulfilled:
\begin{equation}\label{condcausal}
\overline{\Span_{{\mathbb C}}(\mathcal{C})}=\overline{\widetilde{\mathcal{A}}},
\end{equation}
where $\overline{\widetilde{\mathcal{A}}}$ denotes the $C^*$-completion of $\widetilde{\mathcal{A}}$, then $\mathcal{C}$ is called a \emph{causal cone}. It induces a partial order relation on
$S(\widetilde{\mathcal{A}})$, called \emph{causal relation}, by
\begin{align*}
\forall \omega,\eta\in S(\widetilde{\mathcal{A}}),\qquad\omega\preceq\eta\qquad\text{iff}\qquad\forall a\in\mathcal{C},\quad\omega(a)\leq\eta(a).
\end{align*}
\end{defn}

Definition \ref{defncauscone} is strongly motivated by the following result:

\begin{thm}[\cite{CQG2013}]\label{thmreconstruction}
Let $(\mathcal{A},\widetilde{\mathcal{A}},\H,D,\mathcal{J})$ be a commutative Lorentzian spectral
triple constructed from a complete globally hyperbolic Lorentzian manifold $\mathcal{M}$, and let us define the
following subset of pure states:
\begin{align*}
\mathcal{M}(\A) \vc \big\{\omega\in P(\widetilde{\mathcal{A}}) : \mathcal{A}\not\subset\ker\omega\big\}\cong P(\A) \cong \mathcal{M}.
\end{align*}
Then, the causal relation $\preceq$ on $S(\widetilde{\mathcal{A}})$ as defined in Definition \ref{defncauscone} restricted to $\mathcal{M}(\A)$ corresponds to the usual causal relation on
$\mathcal{M}$.
\end{thm}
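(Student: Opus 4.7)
My plan is to unpack Definition \ref{defncauscone} in the commutative setting, identify the causal cone $\C$ with a standard class of functions on $\M$, and then check that the induced partial order on evaluation pure states coincides with the usual causal order $\preceq_\M$ on $\M$.

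For a real $a \in \widetilde{\A}_\M$, a direct computation gives $\mathcal{J}_\M [D_\M,a] = \gamma^0 \widetilde{\gamma}^\mu \partial_\mu a$, which acts on $L^2(\M,S)$ by pointwise multiplication by a Hermitian matrix $M_a(x)$, so condition \eqref{constcaus} is equivalent to $M_a(x) \leq 0$ for every $x \in \M$. In a local Lorentz frame, write $v_a = e_a^{\;\;\mu} \partial_\mu a$ so that $M_a = -v_0 + \gamma^0 \gamma^i v_i$; the Clifford relations together with $(\gamma^0)^2 = -1$ yield $(\gamma^0\gamma^i v_i)^2 = \|\vec v\|^2 \cdot \id$, and this Hermitian summand therefore has eigenvalues $\pm \|\vec v\|$. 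Consequently the eigenvalues of $M_a(x)$ are $-v_0(x) \pm \|\vec v(x)\|$, and $M_a \leq 0$ pointwise is exactly $v_0 \geq \|\vec v\|$ everywhere, i.e.\ $da$ is a past-directed causal covector at every point, equivalently $a$ is non-decreasing along every future-directed causal curve of $\M$. This identifies
\begin{equation*}
\C \eq \big\{\, a \in \widetilde{\A}_\M \,:\, a = \bar a \text{ and } a \text{ is non-decreasing along every future-directed causal curve} \,\big\}.
\end{equation*}
The spanning condition \eqref{condcausal} then follows from the Bernal--S\'anchez theorem: a smooth Cauchy temporal function $\T$ on $\M$ furnishes many elements of $\C$ via bounded monotone composition and smooth bump multipliers, whose complex span is dense in $\widetilde{\A}_\M$.

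With the characterisation of $\C$ in hand, identify each $p \in \M$ with its evaluation state $\omega_p$. The implication $p \preceq_\M q \Rightarrow \omega_p \preceq \omega_q$ is then immediate: integrating $da$ along a future-directed causal curve from $p$ to $q$ gives $a(p) \leq a(q)$ for every $a \in \C$.

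The converse is the main obstacle and the genuine reason why global hyperbolicity must be assumed. Given $p \not\preceq_\M q$, I need to exhibit some $a \in \C$ with $a(p) > a(q)$. If $\T(p) > \T(q)$ for a smooth Cauchy temporal function $\T$, any bounded monotone composition $f\circ\T$ does the job. Otherwise $\T(p) \leq \T(q)$ but $q \notin J^+(p)$; since $J^+(p)$ is closed in a globally hyperbolic manifold, $p$ and $q$ are separated by causally controlled open sets, and one constructs $a$ by adding to a bounded function of $\T$ a smooth perturbation supported in a causally convex neighbourhood of $p$ disjoint from $J^-(q)$, arranging the perturbation so that $da$ remains past-directed causal everywhere. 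The delicate point is to keep the locally engineered bump inside the prescribed subalgebra $\widetilde{\A}_\M$ while globally preserving the pointwise causal condition for $da$; this is precisely the construction carried out in the proof of Theorem~7 of \cite{CQG2013}.
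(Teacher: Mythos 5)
Your identification of the causal cone is sound and is exactly the fact on which the paper says the cited proof rests: the eigenvalue computation showing that $\scal{\phi,\J[D,a]\phi}\le 0$ for all $\phi$ is equivalent to $e_0(a)\ge\|\vec v\|$ pointwise, i.e.\ to $a$ being non-decreasing along future-directed causal curves, is correct (and parallels the argument of Lemma \ref{propC} in the two-sheeted setting), and the implication $p\preceq_\M q\Rightarrow\omega_p\preceq\omega_q$ is indeed immediate by integration along a causal curve. Note, however, that the paper itself offers no proof of Theorem \ref{thmreconstruction}: it is imported from \cite{CQG2013} with only the remark that causal elements are the causal functions, so the comparison can only be made against that indication.

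The genuine gap is in the converse direction, which is the real content of the theorem: given $p\not\preceq_\M q$ you must produce a causal function \emph{inside the chosen algebra} $\wA_\M$ (Hermitian, with $[D,a]$ bounded) satisfying $a(p)>a(q)$, and your text ultimately delegates exactly this construction to the proof of Theorem~7 of \cite{CQG2013} --- which is the statement being proved, so as a standalone argument it is circular. Moreover, the sketch you do give is not obviously implementable as stated: adding a compactly supported bump to a bounded monotone function of a temporal function $\T$ generically violates the pointwise condition $e_0(a)\ge\|\vec v\|$ wherever the bump's spatial gradient is not dominated by the temporal part, and when $\T(p)\le\T(q)$ this cannot be cured by rescaling the bump, since you simultaneously need $a(p)>a(q)$. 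A construction that does work uses the closedness of $J^\pm$ in globally hyperbolic space-times more directly, e.g.\ smoothed ``volume of the causal past'' functions $x\mapsto\mathrm{vol}\prt{J^-(x)\cap U}$ with $U$ an open set contained in $I^-(p)\setminus J^-(q)$ (such $U$ exists precisely because $p\npreceq q$ and $J^-(q)$ is closed); one must then still verify smoothness, boundedness of $[D,a]$, and membership in $\wA_\M$. Relatedly, your treatment of the spanning condition \eqref{condcausal} is asserted rather than proved: what is needed is that $\Span_\setC(\C)$ is dense in the $C^*$-completion of $\wA$, which in the paper's own analogous argument is obtained via separation of points plus Stone--Weierstrass on a compactification, not merely from the existence of a Cauchy temporal function.
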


The proof of this theorem can be found in \cite{CQG2013} and relies on the fact that the functions of the algebra respecting the condition \eqref{constcaus} are the \emph{causal functions} on the manifold, i.e.~real-valued functions which are non-decreasing along future directed causal curves. For this reason, we will use the name \emph{causal elements} for the elements of the algebra respecting \eqref{constcaus} also beyond the commutative case.

Definition \ref{defncauscone} does not depend on the choice of the fundamental symmetry $\J$ as it could equally well be formulated using the Krein space with its natural indefinite inner product \cite{PROC2015}. Also, the role of the unitisation is only technical as we are eventually interested in the causal relation on $\mathcal{M}(\A)$, which we regard as the space of physical states. On the other hand, in the commutative case the choice of the unitisation is equivalent to picking a suitable compactification of the space-time $\M$. The latter is directly related to an old-standing problem of attaching a `boundary' to a, possibly singular, space-time and extending the causal relation to it \cite{MinguzziCompactification}.

\section{Two-sheeted space-times}\label{nc2s}

In \cite{SIGMA2014} we started the exploration of causal structures in noncommutative geometry by studying almost commutative space-times. The latter are products of Lorentzian spectral triples based on globally hyperbolic manifolds and finite (Riemannian) spectral triples. These geometries provide a framework for models of particle physics \cite{WalterBook}.

In this paper we consider the simplest non-trivial finite spectral triple based on the algebra $\A_F = \sC \oplus \sC$. By the standard GNS construction we obtain a faithful representation $\pi$ of $\A_F$ on $\H_F = \sC^2$ by $\pi((a,b)) \psi \vc \left( \begin{smallmatrix} a & 0 \\ 0 & b \end{smallmatrix} \right) \psi$. From now on we shall omit the symbol of the representation and regard the elements of $\A_F$ as diagonal elements in $M_2(\sC)$ acting on $\H_F$ in a natural way. The most general Dirac operator $D_F$ in this setting is a Hermitian matrix in $M_2(\sC)$. This finite triple is even, with the grading $\gamma_F = \left( \begin{smallmatrix} 1 & 0 \\ 0 & -1 \end{smallmatrix} \right)$, but it does not admit a compatible reality structure unless $D_F = 0$ \cite[Proposition 3.1]{DungenED}.

\begin{defn}\label{def:twosheeted}
Let $\M$ be an even dimensional globally hyperbolic manifold. We call a \emph{two-sheeted space-time} the following Lorentzian spectral triple:
\begin{itemize}\itemsep=10pt
\item $\H = \H_\M \ox \H_F = L^2(\mathcal{M},S) \otimes {\mathbb C}^{2} \cong L^2(\mathcal{M},S) \oplus L^2(\mathcal{M},S)$,
\item $\mathcal{A} = \A_\M \otimes \A_F \cong \left(\begin{matrix}{\mathcal{A}}_\mathcal{M} &  0\\ 0 &  {\mathcal{A}}_\mathcal{M}\end{matrix}\right)$, 
\item $\widetilde{\mathcal{A}} = \wA_{\M} \otimes \A_F \cong \left(\begin{matrix}\wA_{\M} &  0\\ 0 &  \wA_{\M} \end{matrix}\right)$,
\item $D = D_\M \ox 1 + \gamma_\M \ox D_F = -i \widetilde\gamma^\mu\nabla^S_\mu \otimes 1 + \gamma_{\M} \ox D_F $, 
\item $\mathcal{J}= \J_\M \ox 1 = i\gamma^0 \otimes 1 = \left(\begin{smallmatrix}i\gamma^0 &  0\\ 0 &  i\gamma^0\end{smallmatrix}\right)$.
\end{itemize}
\end{defn}
This triple is even with $\gamma = \gamma_{\M} \ox \gamma_F$. We will say that the two-sheeted space-time has dimension $n$ if $\dim \M = n$. This is justified, as finite spectral triples are $0$-dimensional from the spectral point of view \cite{WalterBook}.

The results on causal structure do not depend on a particular choice of the algebras $\A_{\M}$ and $\wA_{\M}$, as long as they satisfy the constraints listed on page \pageref{AM} and the condition \eqref{condcausal}. For the purposes of this paper we make the following choices:
\begin{itemize}
\item $\A_\M = C^\infty_c(\M)$ -- the space of compactly supported smooth functions.
\item $\widetilde\A_\M = \Span_\setC({\C_\M})$, with $\C_\M \subset \mathcal{B}(\M)$ denoting the space of smooth causal functions with all derivatives bounded, where a bounded derivative of a function $a$ means here that $[D,a]$ extends to a~bounded operator on $\H$.
\end{itemize}

With these definitions, $(\mathcal{A}_\mathcal{M}, \widetilde{\mathcal{A}}_\mathcal{M},\H_\mathcal{M},D_\mathcal{M},\mathcal{J}_\mathcal{M})$  is an even Lorentzian spectral triple, which implies that the two-sheeted space-time $(\mathcal{A},\widetilde{\mathcal{A}},\H,D,\mathcal{J})$ is a Lorentzian spectral triple \cite[Theorem 1]{SIGMA2014}. We have chosen the algebra $\widetilde{\mathcal{A}}_\mathcal{M} = \Span_{{\mathbb C}}(\mathcal{C}_\mathcal{M})$ to meet the condition \eqref{condcausal}. The fact that $\widetilde{\mathcal{A}}_\mathcal{M}$ separates the points of $\M$ is a simple consequence of the Stone-Weierstrass theorem \cite{Bes09,CQG2013}:

\begin{proposition}
For every two-sheeted space-time, the condition $\overline{\Span_{{\mathbb C}}(\mathcal{C})}=\overline{\widetilde{\mathcal{A}}}$ is respected.
\end{proposition}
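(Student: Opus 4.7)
I would mirror the commutative-case argument from \cite{CQG2013, Bes09} and reduce the claim to a Stone--Weierstrass-type density statement on the commutative $C^*$-algebra $\overline{\widetilde{\A}}$.

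First I would establish that $\C_\M \otimes 1_F \subset \mathcal{C}$. Since $1_F$ commutes with $D_F$, one has $[D, f \otimes 1_F] = [D_\M, f] \otimes 1_F$, so that for any $\phi \in \H_\M \otimes \sC^2$ decomposed as $\psi_1 \otimes e_1 + \psi_2 \otimes e_2$,
\[
\scal{\phi, \J[D, f \otimes 1_F] \phi} \;=\; \sum_{i=1}^{2} \scal{\psi_i, \J_\M[D_\M, f] \psi_i} \;\leq\; 0
\]
whenever $f \in \C_\M$. Because $\widetilde{\A}_\M = \Span_\sC(\C_\M)$ by our choice, this immediately delivers $\widetilde{\A}_\M \otimes 1_F \subset \Span_\sC(\mathcal{C})$.

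The harder step is to cover the ``asymmetric'' contributions $\widetilde{\A}_\M \otimes p_i$ (equivalently, $\widetilde{\A}_\M \otimes \gamma_F$). Here I would use a domination argument: for each compactly supported $h \in C_c^\infty(\M)$, pick some $f_0 \in \C_\M$ whose gradient is strictly future-timelike on a neighbourhood of the support of $h$ (for instance a smoothed $\tanh$ of a global Cauchy time function), so that $\J_\M[D_\M, f_0] \leq -\mu\,I$ there for some $\mu > 0$. For $\lambda > 0$ large enough, the bounded and spatially localised cross-term $h \gamma_\M \otimes [D_F, p_i]$ coming from $[D, h \otimes p_i]$ is dominated by $\lambda\,\J_\M[D_\M, f_0] \otimes 1_F$, so that both $a_\lambda := h \otimes p_i + \lambda f_0 \otimes 1_F$ and $b_\lambda := \lambda f_0 \otimes 1_F$ lie in $\mathcal{C}$; their difference yields $h \otimes p_i \in \Span_\sC(\mathcal{C})$. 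By Gelfand--Naimark, $\overline{\widetilde{\A}} \cong C_0(X \sqcup X)$ where $X = \operatorname{Spec}(\overline{\widetilde{\A}_\M})$, and the elements produced in the two steps above separate every pair of distinct characters (points on the same sheet through $\C_\M \otimes 1_F$; a common base point on different sheets through a bump $h \otimes p_1$) while containing the non-vanishing $1 \otimes 1_F$. The commutative Stone--Weierstrass argument of \cite{CQG2013, Bes09} then closes the density.

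The main obstacle I anticipate is the domination step: one must produce a single $\lambda$, depending on $h$ but uniform in $\phi \in \H$, for which the causality inequality holds. This reduces to a quantitative spectral lower bound $\J_\M[D_\M, f_0] \leq -\mu\,I$ on the compact support of $h$, together with an operator-norm estimate on $[\gamma_\M \otimes D_F, h \otimes p_i]$; both become available precisely because the support of $h$ is compact, whereas the bounded but unlocalised obstruction of the identity $1 \otimes p_i$ is instead absorbed into the $C^*$-closure via the Stone--Weierstrass argument of the previous paragraph.
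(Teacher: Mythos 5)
Your proposal is correct and follows essentially the same route as the paper: a Stone--Weierstrass/Nachbin density argument reduced to showing that causal elements separate the points of $\M \sqcup \M$, with the key mechanism being that the strictly negative definite $\J_\M[D_\M,f_0]$ coming from a causal function with timelike gradient absorbs the off-diagonal term proportional to $m(a-b)$ --- your large-$\lambda$ domination producing $h \otimes p_i$ is the paper's local perturbation $b=a+\epsilon$ with $\epsilon$ small, read in the opposite scaling. One small point: the commutator $[D,h\otimes p_i]$ also contains the diagonal block $[D_\M,h]\otimes p_i$, which your domination must (and easily does) absorb as well, since it is bounded and supported in the compact set $\mathrm{supp}\,h$.
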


\begin{proof}
The proof relies on the complex version of the Stone--Weierstrass theorem, since $\widetilde\A=\Span_{{\mathbb C}}(\mathcal{C_\M}) \oplus \Span_{{\mathbb C}}(\mathcal{C_\M})$ corresponds to a sub-algebra of functions on some compactification of $\M \sqcup \M$ (Nachbin compactification \cite{nachbin,Bes09}). We only need to prove that the functions $\mathbf{a} \in \Span_{{\mathbb C}}(\mathcal{C_\M}) \oplus \Span_{{\mathbb C}}(\mathcal{C_\M})$ respecting $\forall \; \phi\in\H, \scal{\phi,\mathcal{J}[D,\mathbf{a}]\phi}\leq0$ separate the points of $\M \sqcup \M$. With $D_F = \left(\begin{smallmatrix}k & m\\m^* & l\end{smallmatrix}\right)$, $k,l \in \sR$, $m \in \sC$, the constraint \eqref{constcaus} can be rewritten and simplified:
\begin{equation}\label{eqdiracexp}
\forall \; \phi=(\phi_1,\phi_2)\in L^2(\mathcal{M},S) \oplus L^2(\mathcal{M},S),
\end{equation}
\[ \scal{\phi_1,\mathcal{J_\M}[D_\M,a]\phi_1} + \scal{\phi_2,\mathcal{J_\M}[D_\M,b]\phi_2} + (a-b) \scal{\phi,\mathcal{J_\M} \gamma_{\M} \otimes  \left(\begin{smallmatrix}0 & -m\\m^* & 0\end{smallmatrix}\right) \phi}  \ \leq0,
\]
with $\mathbf{a} = \left(\begin{smallmatrix}a & 0\\0 & b\end{smallmatrix}\right).$

On the strength of \cite[Theorem 11]{CQG2013}, each function $a \in \Span_{{\mathbb C}}(\mathcal{C_\M}) $ respecting $\scal{\phi_1,\mathcal{J_\M}[D_\M,a]\phi_1} \leq 0$ is actually in $\C_\M$, i.e.~$a$ is a causal functions (smooth, with all derivatives bounded) on $\M$, and the operator $\mathcal{J_\M}[D_\M,a]$ is (strictly) negative definite where the gradient of $a$ does not vanish. Hence, elements of the form $\mathbf{a} = \left(\begin{smallmatrix}a & 0\\0 & a\end{smallmatrix}\right)$ with $a \in \C_\M$ separate the points on the same copy of $\M$, and also different points on different copies. To show that two points on different copies of $\M$ with the same localisation (i.e.~two points $(p,0)$ and $(0,p)$ in $\M \sqcup \M$) can be separated, it is sufficient to choose locally $b = a + \epsilon$ with $a \in \C_\M$ having a non-vanishing gradient at $p$ and $\epsilon$ sufficiently small such that the inequality \eqref{eqdiracexp} remains valid. 
\end{proof}

\vspace{0.5cm}

Let us now turn to the space of states of the model at hand. As argued at the end of previous section we shall neglect the states in $P(\widetilde\A)$ coming from the compactification process. We define as in Theorem \ref{thmreconstruction} the space of physical pure states:
\begin{align*}
\mathcal{M}(\A) \vc \big\{\omega\in P(\widetilde{\mathcal{A}}) : \mathcal{A}\not\subset\ker\omega\big\}\cong P(\A).
\end{align*}

Since $\A_F$ has only two pure states and $P(\A) \cong \M \x \{\delta_1, \delta_2\} \cong \M \sqcup \M$, thus the name of a two-sheeted space-time. Although the algebra $\A$ (hence the topology) of the two-sheet space-time is commutative, its geometry is not if $D_F$ has a non-trivial diagonal part.

The full space of states is much larger and we will restrict to a special class of \emph{mixed} states defined as the convex combinations of two states with the same localisation:
\begin{align*}
\N(\A) \vc \M(\A_\M) \times S(\A_F) \cong \M \times [0,1] \subset S(\A).
\end{align*}
These states can be seen as covering the \emph{area} between the two sheets. A state in $\N(\A)$ will be denoted by $\omega_{p,\xi}$, with $p\in\M$, $\xi\in[0,1]$. Its evaluation on an element $\mathbf{a} = \left(\begin{smallmatrix}a & 0\\0 & b\end{smallmatrix}\right) \in \wA$ reads $\omega_{p,\xi}(\mathbf{a}) = \xi\, a(p) + (1-\xi)\, b(p)$. The extreme points, with $\xi=0$ or $\xi=1$, are precisely the pure states in $\M(\A)$.

We are now ready to investigate the causal structure of a two-sheeted space-time. We start with some general observations.

\begin{lemma}
\label{propcondcausal}
Let us take two states $\omega_{p,\xi},\omega_{q,\varphi}$ with $\varphi,\xi \in [0,1]$. Then, $\omega_{p,\xi} \preceq \omega_{q,\varphi}$ if and only if
$\forall\, \mathbf{a} = \left(\begin{smallmatrix} a & 0\\0 & b\end{smallmatrix}\right) \in \mathcal{C}$,
\[
\varphi\, a(q) - \xi\,a(p) + (1-\varphi)\, b(q) - (1-\xi)\, b(p) \geq 0.
\]
\end{lemma}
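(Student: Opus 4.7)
The plan is simply to unfold the relevant definitions. By Definition~\ref{defncauscone}, the relation $\omega_{p,\xi} \preceq \omega_{q,\varphi}$ is by construction equivalent to requiring $\omega_{p,\xi}(\mathbf{a}) \leq \omega_{q,\varphi}(\mathbf{a})$ for every $\mathbf{a} \in \mathcal{C}$. Since the causal cone $\mathcal{C}$ is a subset of $\wA = \wA_\M \otimes \A_F$, every such $\mathbf{a}$ is automatically of the diagonal form $\mathbf{a} = \left(\begin{smallmatrix} a & 0\\ 0 & b \end{smallmatrix}\right)$ with $a,b$ Hermitian elements of $\wA_\M$. So the lemma will be obtained by rewriting the inequality $\omega_{p,\xi}(\mathbf{a}) \leq \omega_{q,\varphi}(\mathbf{a})$ in coordinates using the explicit action of mixed states on such diagonal matrices.

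The second step is therefore to invoke the evaluation formula for states in $\N(\A)$ that was recalled just above the lemma, namely $\omega_{p,\xi}(\mathbf{a}) = \xi\, a(p) + (1-\xi)\, b(p)$, and analogously $\omega_{q,\varphi}(\mathbf{a}) = \varphi\, a(q) + (1-\varphi)\, b(q)$. Substituting these two expressions into $\omega_{p,\xi}(\mathbf{a}) \leq \omega_{q,\varphi}(\mathbf{a})$ and moving everything to the right-hand side yields exactly
\[
\varphi\, a(q) - \xi\, a(p) + (1-\varphi)\, b(q) - (1-\xi)\, b(p) \geq 0,
\]
which is the claimed characterisation. The converse direction is automatic from the same identity.

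There is no genuine obstacle here; the statement is a direct specialisation of Definition~\ref{defncauscone} to the subfamily $\N(\A) \subset S(\wA)$. In particular, no density or compactification argument is needed, because the partial order $\preceq$ is tested against $\mathcal{C}$ itself rather than against the closure $\overline{\Span_\sC(\mathcal{C})}$; accordingly it suffices to use the explicit diagonal form of elements of $\wA$ together with the evaluation formula for convex combinations of pure states. The lemma should thus be regarded as a convenient computational reformulation of $\preceq$, to be used systematically in the subsequent analysis of the two-sheeted causal structure.
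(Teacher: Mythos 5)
Your proposal is correct and matches the paper's proof, which likewise treats the lemma as a direct rewriting of the condition $\forall\, \mathbf{a}\in\mathcal{C},\ \omega_{p,\xi}(\mathbf{a})\leq\omega_{q,\varphi}(\mathbf{a})$ from Definition~\ref{defncauscone} using the evaluation formula $\omega_{p,\xi}(\mathbf{a}) = \xi\, a(p) + (1-\xi)\, b(p)$. You simply spell out the substitution in more detail than the one-line proof in the paper.
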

\begin{proof}
This is simply a rewriting of the condition $\forall a\in\mathcal{C}, \omega(a)\leq\eta(a)$ from Definition \ref{defncauscone}.
\end{proof}

\vspace{0.3cm}

We have then the first result:

\vspace{0.3cm}

\begin{proposition}
If the finite part Dirac operator $D_F$ is diagonal, then for any $p, q \in \M$, $\varphi,\xi \in [0,1]$ we have $\omega_{p,\xi} \preceq \omega_{q,\varphi}$ if and only if $\varphi = \xi$ and $p \preceq q$ in $\M$.
\end{proposition}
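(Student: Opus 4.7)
The plan is to exploit the fact that when $D_F = \left(\begin{smallmatrix}k & 0\\ 0 & l\end{smallmatrix}\right)$ (so $m=0$), the coupling term in \eqref{eqdiracexp} vanishes identically, leaving $\scal{\phi_1,\mathcal{J}_\M[D_\M,a]\phi_1} + \scal{\phi_2,\mathcal{J}_\M[D_\M,b]\phi_2} \leq 0$ for all $\phi = (\phi_1,\phi_2) \in \H$. Since $\phi_1$ and $\phi_2$ may be chosen independently (and either of them set to zero), this is equivalent to the two conditions $\scal{\phi_i, \mathcal{J}_\M[D_\M,\cdot]\phi_i} \leq 0$ holding separately in $a$ and in $b$. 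By the very appeal to \cite[Theorem 11]{CQG2013} that was made in the preceding proof, this amounts to $\mathbf{a} = \left(\begin{smallmatrix}a & 0\\0 & b\end{smallmatrix}\right) \in \mathcal{C}$ if and only if both $a, b \in \mathcal{C}_\M$, independently.

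Sufficiency is then immediate from Lemma \ref{propcondcausal}: when $\xi = \varphi$ and $p \preceq q$, the required inequality collapses to $\xi(a(q)-a(p)) + (1-\xi)(b(q)-b(p)) \geq 0$, which holds since both $a$ and $b$ are non-decreasing along future-directed causal curves.

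For necessity, I would first pin down $\xi = \varphi$ by testing against constant causal elements. Constants are smooth with vanishing gradient and bounded derivatives, so they lie in $\mathcal{C}_\M$, and in particular $\left(\begin{smallmatrix}\pm 1 & 0\\0 & 0\end{smallmatrix}\right) \in \mathcal{C}$. Plugging $\left(\begin{smallmatrix}1 & 0\\0 & 0\end{smallmatrix}\right)$ into Lemma \ref{propcondcausal} yields $\varphi \geq \xi$, while $\left(\begin{smallmatrix}-1 & 0\\0 & 0\end{smallmatrix}\right)$ yields $\xi \geq \varphi$, hence $\xi = \varphi$. With this equality in hand, the inequality of Lemma \ref{propcondcausal} reduces to $\xi(a(q)-a(p)) + (1-\xi)(b(q)-b(p)) \geq 0$ for all $a, b \in \mathcal{C}_\M$ chosen independently. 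If $\xi \in (0,1]$, setting $b = 0$ forces $a(q) \geq a(p)$ for every causal $a \in \mathcal{C}_\M$, and Theorem \ref{thmreconstruction} then gives $p \preceq q$ in $\M$; the symmetric choice $a = 0$ handles the remaining case $\xi = 0$.

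There is no serious obstacle here: the diagonality of $D_F$ decouples the causal cone into two independent copies of the classical manifold causality problem already settled by Theorem \ref{thmreconstruction}. The only mild points to check are that constants belong to $\widetilde{\mathcal{A}}_\M$ (clear from $\widetilde{\mathcal{A}}_\M = \Span_\setC(\mathcal{C}_\M)$) and the boundary values $\xi \in \{0,1\}$, at which $\omega_{p,\xi}$ degenerates to a pure state in $\M(\A)$ and the argument proceeds through the surviving summand.
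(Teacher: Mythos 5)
Your proposal is correct and follows essentially the same route as the paper: the diagonal $D_F$ decouples the causal cone into $\C_\M \oplus \C_\M$, constant elements force $\varphi = \xi$, and the remaining inequality reduces to the classical characterisation of $p \preceq q$ by causal functions. Your extra care with the boundary values $\xi \in \{0,1\}$ and the membership of constants in $\widetilde{\A}_\M$ is a harmless refinement of the paper's argument, not a different method.
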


\begin{proof}
Using the same decomposition as in \eqref{eqdiracexp} with $D_F = \left(\begin{smallmatrix}k & 0\\0 & l\end{smallmatrix}\right)$, $k,l \in \sR$, we find:
\[ \forall \; \phi_1,\phi_2\in L^2(\mathcal{M},S),\quad \scal{\phi_1,\mathcal{J_\M}[D_\M,a]\phi_1} + \scal{\phi_2,\mathcal{J_\M}[D_\M,b]\phi_2} \  \ \leq0,
\]
which means that $\C \cong \C_\M \oplus \C_\M$.

Let us take two states $\omega_{p,\xi} \preceq \omega_{q,\varphi} $ with $\varphi,\xi \in [0,1]$ and consider the inequality 
\[
\varphi\, a(q) - \xi\,a(p) + (1-\varphi)\, b(q) - (1-\xi)\, b(p) \geq 0
\]
valid for all $a,b \in \C_\M$. Since a constant function is always a causal function, we can chose $b=0$ and $a=\pm 1$ to get $\varphi \geq \xi$ and $\varphi \leq \xi$, so there is no causal relation possible if $\varphi \neq \xi$.

Then under the hypothesis $\varphi = \xi$, the condition becomes $\varphi\, (a(q) - a(p)) + (1-\varphi)\, (b(q) - b(p)) \geq 0$, which is valid $\forall a,b\in\C_\M$ if and only if $p \preceq q$.
\end{proof}

\vspace{0.3cm}

The above result means that if $D_F$ is diagonal then no causal relation between the sheets is possible. Since the diagonal part of $D_F$ has no impact on the causal structure due to the fact that is commutes with the entire algebra $\A_F$, in the following we will restrict to $D_F$ of the form $D_F=\left(\begin{smallmatrix}0 & m\\m^* & 0\end{smallmatrix}\right)$, with $m \in \sC$.

\vspace{0.3cm}

\begin{proposition}\label{proppq}
Let us consider two states $\omega_{p,\xi}, \omega_{q,\varphi} \in \N(\A)$. If $\omega_{p,\xi} \preceq \omega_{q,\varphi}$, then $p \preceq q$.
\end{proposition}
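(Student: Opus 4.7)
The plan is to exploit the sub-cone of causal elements coming from diagonal elements of the form $\mathbf{a} = \left(\begin{smallmatrix} a & 0 \\ 0 & a \end{smallmatrix}\right)$, which effectively decouple the two sheets. Looking at the expanded causality condition \eqref{eqdiracexp}, specialised to $D_F = \left(\begin{smallmatrix}0 & m\\m^* & 0\end{smallmatrix}\right)$, the off-diagonal term is proportional to the difference $(a-b)$. Hence, by choosing $b = a$, the coupling contribution vanishes identically, and the constraint reduces to
\[
\scal{\phi_1,\mathcal{J}_\M[D_\M,a]\phi_1} + \scal{\phi_2,\mathcal{J}_\M[D_\M,a]\phi_2} \leq 0, \qquad \forall \phi_1,\phi_2 \in L^2(\M, S).
\]
Since $\phi_1, \phi_2$ are arbitrary, this holds exactly when $\scal{\phi,\mathcal{J}_\M[D_\M,a]\phi}\leq 0$ for all $\phi \in L^2(\M,S)$, i.e.~when $a \in \C_\M$. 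Therefore, for every $a \in \C_\M$ the diagonal element $\left(\begin{smallmatrix} a & 0 \\ 0 & a \end{smallmatrix}\right)$ lies in the causal cone $\C$ of the two-sheeted triple.

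Next I would apply Lemma~\ref{propcondcausal} to such elements: the hypothesis $\omega_{p,\xi} \preceq \omega_{q,\varphi}$ forces
\[
\varphi\, a(q) - \xi\, a(p) + (1-\varphi)\, a(q) - (1-\xi)\, a(p) = a(q) - a(p) \geq 0, \qquad \forall a \in \C_\M.
\]
The $\xi$ and $\varphi$ dependence drops out completely, which is exactly what one needs: the inequality $a(q) \geq a(p)$ holds for every smooth causal function on $\M$ with bounded derivatives.

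Finally, I would invoke the manifold reconstruction built into Theorem~\ref{thmreconstruction} (more precisely, the fact, established in \cite{CQG2013}, that the causal relation on $\M$ is characterised by non-decrease of causal functions along future-directed causal curves): having $a(p) \leq a(q)$ for all $a \in \C_\M$ is equivalent to $p \preceq q$ in the usual causal sense on $\M$. This concludes the argument.

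I do not expect any genuine obstacle here — the essential observation is purely algebraic, namely that diagonal $\mathbf{a}$ with $a = b$ kill the off-diagonal contribution of $D_F$ and thus provide a canonical embedding $\C_\M \hookrightarrow \C$. The only point requiring a small care is that one must work with elements of $\C_\M$ (not just their complex span) when applying Lemma~\ref{propcondcausal}, which is precisely what Theorem~\ref{thmreconstruction} requires on the manifold side.
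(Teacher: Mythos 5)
Your proof is correct and takes essentially the same route as the paper: both exploit that diagonal elements $\left(\begin{smallmatrix} a & 0\\ 0 & a\end{smallmatrix}\right)$ with $a\in\C_\M$ kill the off-diagonal $(a-b)$ term in \eqref{eqdiracexp} and hence lie in $\C$, then apply Lemma \ref{propcondcausal} to get $a(p)\leq a(q)$ for all causal $a$, and finally use the characterisation of the causal order on $\M$ by causal functions from \cite{CQG2013}. The paper merely phrases this as a contrapositive, picking a causal function with $a(q)<a(p)$ when $p\npreceq q$, which is not a substantive difference.
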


\begin{proof}
By contradiction, let us suppose that $p \npreceq q$ so there exists a causal function $a \in \C_\M$ such that $a(q) < a(p)$. Using $\mathbf{a} = \left(\begin{smallmatrix}a & 0\\0 & a\end{smallmatrix}\right)$ which respects the inequality \eqref{eqdiracexp}, Lemma \ref{propcondcausal} gives $a(q) \geq a(p)$, so $\omega_{p,\xi} \npreceq \omega_{q,\varphi}$.
\end{proof}

\vspace{0.3cm}

This proposition implies no classical causality violation at the level of each sheet. In fact, the latter is general feature of causal structures of almost commutative space-times \cite{PROC2015}.

The remaining question is: can some causal relations be possible between two different sheets when the finite part Dirac operator is not diagonal? Surprisingly enough, the answer is positive. We now present the main result of the paper:

\begin{thm}\label{thm_MAIN}
Let $(\mathcal{A},\widetilde{\mathcal{A}},\H,D,\mathcal{J})$ be a two-sheeted space-time of dimension 2 or 4. Two states $\omega_{p,\xi}, \omega_{q,\varphi} \in \N(\A)$ are causally
related with $\omega_{p,\xi} \preceq \omega_{q,\varphi}$ if and only if  $p \preceq q$ on $\M$ and
\begin{align}\label{const2d}
l(\gamma)  \geq \frac{\abs{\arcsin\sqrt{\varphi}-\arcsin\sqrt{\xi}}}{\abs{m}},
\end{align}
where $l(\gamma)$ represents the length of a causal curve $\gamma$ going from $p$ to $q$ on $\M$.
\end{thm}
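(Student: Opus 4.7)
The strategy is to convert the cone condition \eqref{constcaus} into a pointwise matrix inequality, reduce that via Clifford algebra to a scalar differential relation coupling $da$, $db$ and $a-b$, and then integrate along causal curves of $\M$. Proposition~\ref{proppq} already forces $p\preceq q$, so only the length bound has to be produced.

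\textbf{Pointwise form of the cone.} For $\mathbf{a}=\mathrm{diag}(a,b)\in\widetilde\A$, the commutator $[D,\mathbf{a}]$ leaves no differential operator behind, so $\J[D,\mathbf{a}]$ is a bundle endomorphism of $S\otimes\sC^{2}$ with fibre
\[
M(x)=\begin{pmatrix}
\J_\M[D_\M,a](x) & -m\,(a-b)(x)\,\J_\M\gamma_\M\\[2pt]
m^{*}(a-b)(x)\,\J_\M\gamma_\M & \J_\M[D_\M,b](x)
\end{pmatrix},
\]
and \eqref{constcaus} is equivalent to $M(x)\leq 0$ for every $x\in\M$. The diagonal blocks are Clifford multiplications by $da$, $db$ with spectra $\pm\sqrt{-g(df,df)}$ on causal covectors, while $\J_\M\gamma_\M$ anticommutes with $\J_\M\widetilde\gamma^\mu$ and squares to a scalar. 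In dimensions $2$ and $4$ the spinor rank is low enough to diagonalise $M(x)$ explicitly; I expect the negativity condition to reduce cleanly to
\[
\sqrt{-g(da,da)(x)}\;\sqrt{-g(db,db)(x)}\;\geq\;|m|^{2}\,(a(x)-b(x))^{2},
\]
together with $a,b$ being causal functions on $\M$.

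\textbf{From the cone to the length bound.} Parametrise a future-directed causal curve $\gamma$ from $p$ to $q$ by its length $s\in[0,L]$. Applying the pointwise inequality to $da\vert_\gamma, db\vert_\gamma$ via Cauchy--Schwarz gives the ODE bound $\dot a(s)\,\dot b(s)\geq|m|^{2}(a-b)^{2}(s)$ with $\dot a,\dot b\geq 0$. Setting $\xi=\sin^{2}\theta_{\xi}$, $\varphi=\sin^{2}\theta_{\varphi}$, I would pick a smooth interpolation $\theta:[0,L]\to[0,\pi/2]$ with $\theta(0)=\theta_{\xi}$, $\theta(L)=\theta_{\varphi}$, $|\dot\theta|\leq|m|$, and compute
\[
\tfrac{d}{ds}\bigl[\sin^{2}\theta(s)\,a(\gamma(s))+\cos^{2}\theta(s)\,b(\gamma(s))\bigr]=\sin^{2}\theta\,\dot a+\cos^{2}\theta\,\dot b+2\sin\theta\cos\theta\,\dot\theta\,(a-b).
\]
An AM--GM argument combined with the ODE bound shows this derivative is non-negative whenever $L\geq|\theta_{\varphi}-\theta_{\xi}|/|m|$; integrating from $0$ to $L$ and applying Lemma~\ref{propcondcausal} gives sufficiency. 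For necessity I would construct an explicit extremal pair $(a,b)\in\C$ that saturates $\dot a\dot b=|m|^{2}(a-b)^{2}$ along $\gamma$ (a two-parameter family solving the associated Hamiltonian system) and read off, via Lemma~\ref{propcondcausal}, that $\omega_{p,\xi}\preceq\omega_{q,\varphi}$ forces $l(\gamma)\geq|\arcsin\sqrt\varphi-\arcsin\sqrt\xi|/|m|$.

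\textbf{Main obstacle.} The decisive step is the spectral analysis of $M(x)$: proving that the $4\times 4$ (for $n=2$) or $8\times 8$ (for $n=4$) Hermitian fibre operator is negative semi-definite precisely when the scalar inequality above holds, with no residual Clifford cross-term constraints. Because the chirality element $\gamma_\M$ and the spinor rank differ between the two dimensions, this has to be done by separate, direct gamma-matrix computations, which presumably explains why the authors split the cases into two subsequent sections. A secondary technical issue is producing the extremal witness on a global causal curve: one must extend the locally saturating pair $(a,b)$ off $\gamma$ by a smooth cut-off that preserves the pointwise cone condition on all of $\M$, which requires some care when $\gamma$ is not achronal or when $\M$ has non-trivial causal structure away from $\gamma$.
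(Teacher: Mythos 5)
Your sufficiency strategy is essentially the paper's (interpolation $\chi=\sin^2\theta$ with $|\dot\theta|\le|m|$, integration along the curve, an AM--GM step in place of the paper's explicit choice of test spinors), but it rests on a pointwise characterisation of the causal cone that is false as stated. In dimension 2, positivity of $-\J[D,\mathbf a]$ is equivalent to $a,b$ being causal \emph{together with the two crossed null-derivative inequalities} $\Omega^2(a_{,0}+a_{,1})(b_{,0}-b_{,1})\ge |m|^2(a-b)^2$ and $\Omega^2(a_{,0}-a_{,1})(b_{,0}+b_{,1})\ge |m|^2(a-b)^2$ (the $4\times4$ matrix of Lemma \ref{propC} decouples into two $2\times2$ blocks); your condition $\sqrt{-g(da,da)}\sqrt{-g(db,db)}\ge|m|^2(a-b)^2$ is only the geometric mean of these and is strictly weaker --- take $da$ nearly parallel to one null direction and $db$ nearly parallel to the other, and the geometric mean can hold while one crossed inequality fails. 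For the ``if'' direction this is survivable, because only the implication (cone $\Rightarrow$ your scalar bound along timelike directions) is needed, and that implication is true; but you do not prove it, and in dimension 4 it \emph{is} the hard part: it amounts to the spinor computation of Section \ref{sec4dim} (solving $\psi^*V^a\psi=w^a$ and maximising the off-diagonal term $\Re\{im\,\psi_1^*V\psi_2\}$), which you explicitly defer as ``the main obstacle''. So the four-dimensional sufficiency is not established by your argument.

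The necessity direction is where the incorrect characterisation genuinely bites. A pair $(a,b)$ saturating $\dot a\,\dot b=|m|^2(a-b)^2$ along $\gamma$ need not lie in $\mathcal C$: membership requires the full matrix positivity (both crossed conditions in dimension 2, positive semidefiniteness of the $8\times8$ matrix in dimension 4), which is strictly stronger than your scalar relation. The entire content of the paper's necessity proof is the explicit witness $a=-\tfrac12\cot(l^m_\gamma+\sigma\arcsin\sqrt\xi+\epsilon)$, $b=\tfrac12\tan(l^m_\gamma+\sigma\arcsin\sqrt\xi+\epsilon)$, the $\epsilon$-regularisation needed for the pure-state endpoints, and the verification (Appendices A and B, via the non-negativity of the characteristic-polynomial coefficients) that this element really is causal; none of this is supplied by your ``two-parameter Hamiltonian family''. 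You also leave open how the witness is extended off $\gamma$ while remaining in $\mathcal C$, and you do not treat null or partially null curves, which the paper handles by a separate continuity-and-transitivity argument at the end of Section \ref{sec2dim}. As it stands, the ``only if'' half is a plan rather than a proof, and the plan's key criterion for causality of the witness is the wrong one.
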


It is highly plausible that Theorem \ref{thm_MAIN} holds in an unaltered form in an arbitrary even number of dimensions, but the complexity of a rigorous proof grows significantly with the dimension (see also Section \ref{sec:conc}). In the next section we will provide a complete proof in the case of dimension 2, which is somewhat special as every 2-dimensional Lorentzian metric is (locally) conformally flat \cite[Example 7.9]{Nakahara}. Then, in Section \ref{sec4dim}, we will outline a general procedure of extending the proof to higher dimensions and provide the necessary details for dimension 4.

Before we pass on to the technical details, let us comment on the implications of Theorem \ref{thm_MAIN}.

Formula \eqref{const2d} clearly shows that for $m \neq 0$ there exist causal paths linking the two sheets. Indeed, if we take $\varphi=0$ and $\xi=1$, we find that two pure states localised at $p$ and $q$, but on different sheets are causally related if and only if $p$ and $q$ are causally related on $\M$ and 
\[
l(\gamma)  \geq \frac{\pi}{2\abs{m}},
\]
where $l(\gamma)$ represents the length of a causal curve $\gamma$ going from $p$ to $q$ on $\M$. This relation is graphically represented in Figure \ref{figcausalrel}. We note that the value $\frac{1}{\abs{m}}$ represents in fact the \emph{distance} between the two sheets as calculated using Connes' distance formula \cite{C94}. The condition on the length of the curve $\gamma$, which physically is the amount of proper time along $\gamma$, is then directly related to the distance between the two sheets. This result is similar to the one we obtained in the case of the finite algebra $\A_F = M_2(\setC)$ \cite{SIGMA2014}.

\begin{figure}[ht!] \centering
\includegraphics[width=10cm]{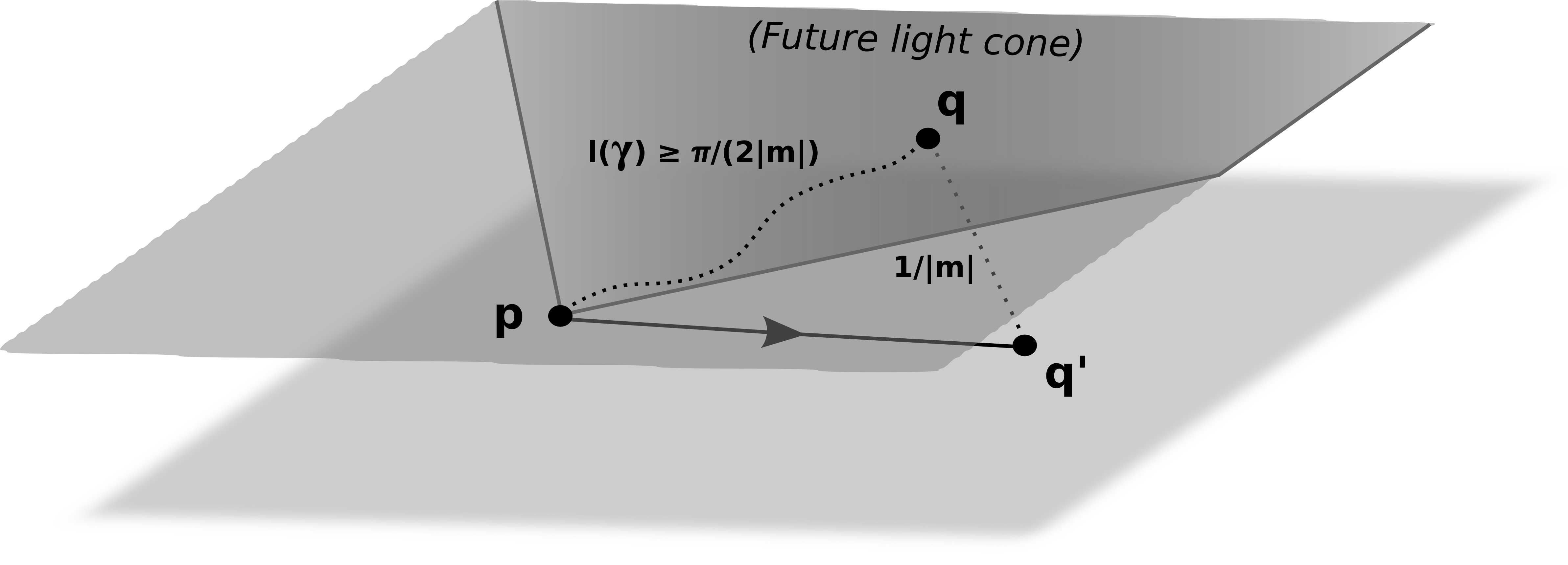}
\caption{Causal relation between the two sheets.}
\label{figcausalrel}
\end{figure}

As suggested by formula \eqref{const2d} a \emph{causal path} connecting the two sheets is not straight. For a two-dimensional Minkowski space-time it is in fact possible to draw a future cone (see \cite[Section 4]{CQG2013} for a precise definition) of a given state $\omega_{p,\xi}$, as represented in Figure \ref{figfutcon}.

\begin{figure}[ht!] \centering
\includegraphics[width=7cm]{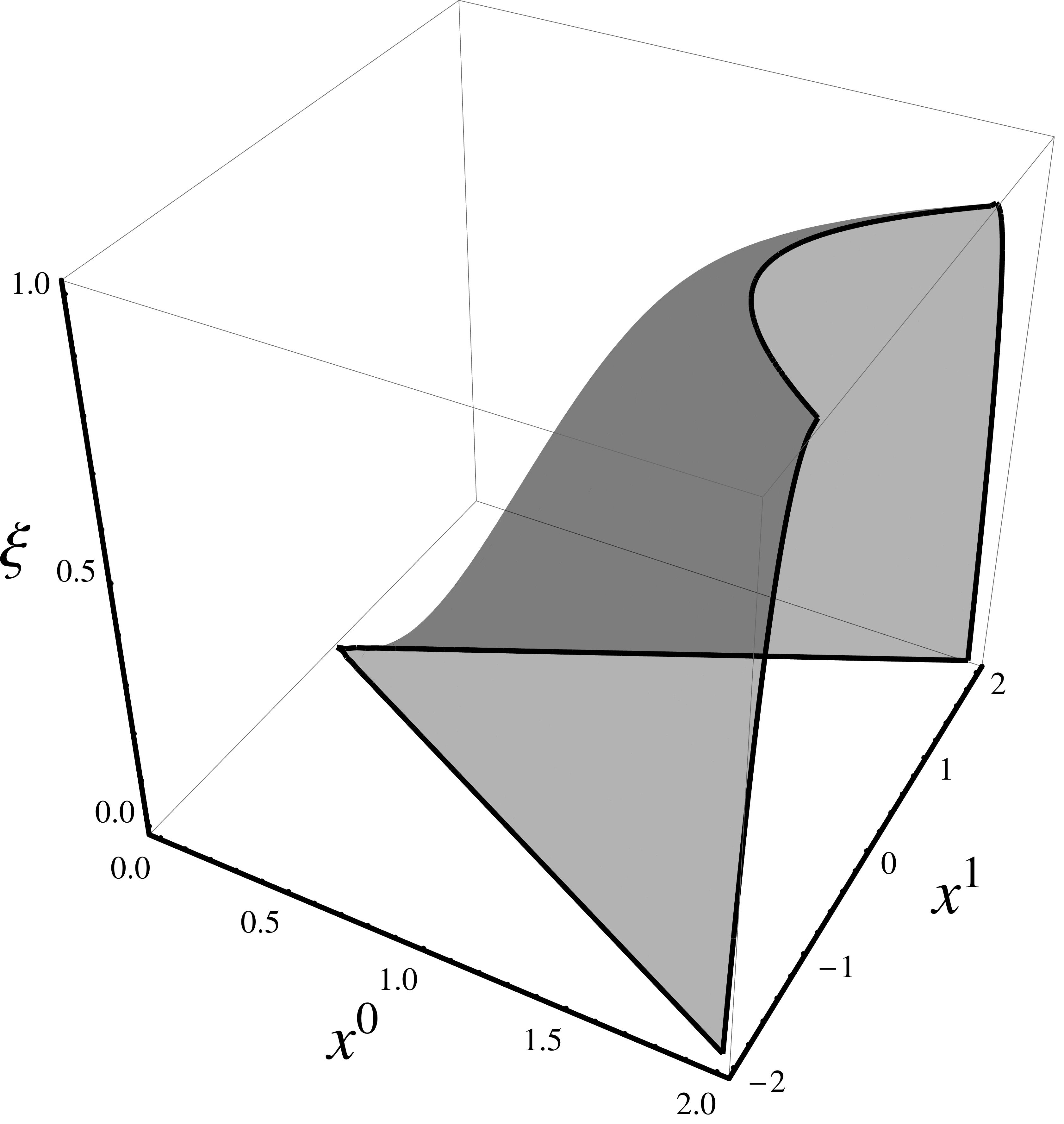}
\caption{The boundary of the future cone of the state $\omega_{(0,0),0}$ for $m = 1$ and $M = \sR^{1,1}$. All of the `causal paths' starting from $\omega_{(0,0),0}$ must lie under the plotted surface.}
\label{figfutcon}
\vspace*{0.8cm}
\end{figure}

\section{The two-dimensional case}\label{sec2dim}

In this section, we compute the causal structure for a two-sheeted space-time (recall Definition \ref{def:twosheeted}) of dimension 2. Let us consider a two-dimensional complete globally hyperbolic Lorentzian manifold $\M$ with a spin structure $S$. Since any two-dimensional metric is (locally) conformally flat \cite[Example 7.9]{Nakahara}, we have $g^{\mu\nu} = \Omega^2 \eta^{\mu\nu}$, with $\Omega$ a positive function on $\M$ and $\eta$ denoting the Minkowski metric. The Dirac operator in this setting reads
\begin{align*}
D = -i \widetilde\gamma^\mu\nabla^S_\mu \otimes 1 + \gamma^0 \gamma^1 \otimes  \left( \begin{smallmatrix} 0 & m \\ m^* & 0 \end{smallmatrix} \right),
\end{align*}
with $\widetilde\gamma^\mu = \Omega \gamma^{\mu}$ and some complex parameter $m$.

We will now prove the following:

\begin{thm}\label{thm2d}
Let $(\mathcal{A},\widetilde{\mathcal{A}},\H,D,\mathcal{J})$ be a two-sheeted space-time of dimension 2. Two states $\omega_{p,\xi}, \omega_{q,\varphi} \in \N(\A)$ are causally
related with $\omega_{p,\xi} \preceq \omega_{q,\varphi}$ if and only if  $p \preceq q$ on $\M$ and
\begin{align*}
l(\gamma)  \geq \frac{\abs{\arcsin\sqrt{\varphi}-\arcsin\sqrt{\xi}}}{\abs{m}},
\end{align*}
where $l(\gamma)$ represents the length of a causal curve $\gamma$ going from $p$ to $q$ on $\M$.
\end{thm}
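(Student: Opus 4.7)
\emph{Step 1 (pointwise reduction).} I first unpack $\scal{\phi,\mathcal{J}[D,\mathbf{a}]\phi}\leq 0$ into a pointwise matrix condition. Fixing the 2D Clifford representation $\gamma^0=i\sigma^1$, $\gamma^1=\sigma^2$ (so $\gamma_\M=\sigma^3$ and $\mathcal{J}=-\sigma^1$) and using \eqref{eqdiracexp} together with $\widetilde\gamma^\mu=\Omega\gamma^\mu$, the operator $\mathcal{J}[D,\mathbf{a}]$ acts pointwise as a $4\times 4$ Hermitian multiplication matrix. In the basis of $\sC^2\ox\sC^2$ diagonalising $\gamma_\M$ and in light-cone coordinates $\partial_\pm\vc\partial_0\pm\partial_1$, this matrix splits into two Hermitian blocks
\[
\begin{pmatrix}-\Omega\,\partial_+ a & -(a-b)m\\ -(a-b)\overline m & -\Omega\,\partial_- b\end{pmatrix},\qquad
\begin{pmatrix}-\Omega\,\partial_+ b & (a-b)\overline m\\ (a-b)m & -\Omega\,\partial_- a\end{pmatrix}.
\]
Both being negative semi-definite at every point is equivalent to $\partial_\pm a,\partial_\pm b\geq 0$ (classical causality of $a$ and $b$ on $\M$, compatible with Proposition~\ref{proppq}) together with the cross-inequalities $\Omega^2\,\partial_+ a\,\partial_- b\geq(a-b)^2|m|^2$ and $\Omega^2\,\partial_- a\,\partial_+ b\geq(a-b)^2|m|^2$.

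\emph{Step 2 (AM--GM along a causal curve).} Fix a future-directed timelike $\gamma:[0,L]\to\M$ from $p$ to $q$ parametrised by proper time, and any smooth $\xi:[0,L]\to[0,1]$ with $\xi(0)=\xi$, $\xi(L)=\varphi$. Setting $F(s)\vc\omega_{\gamma(s),\xi(s)}(\mathbf{a})$, $\delta(s)\vc a(\gamma(s))-b(\gamma(s))$ and writing $\dot\gamma^\pm$ for the null components of $\dot\gamma$, one has
\[
F'(s)=\xi'\delta+\tfrac12\dot\gamma^+\bigl[\xi\partial_+a+(1-\xi)\partial_+b\bigr]+\tfrac12\dot\gamma^-\bigl[\xi\partial_-a+(1-\xi)\partial_-b\bigr].
\]
Pairing $\xi\partial_+a$ with $(1-\xi)\partial_-b$ and $\xi\partial_-a$ with $(1-\xi)\partial_+b$, applying AM--GM under the cross-inequalities of Step 1, and using $\dot\gamma^+\dot\gamma^-=\Omega^2$ for a unit-speed timelike curve, I obtain
\[
F'(s)\geq \xi'(s)\delta(s)+2|m|\,|\delta(s)|\sqrt{\xi(s)(1-\xi(s))}.
\]
Since the value $\delta(s)$ can be prescribed as any real number by adding independent constants to $a$ and $b$ locally near $\gamma(s)$, this lower bound is non-negative on the whole of $\mathcal{C}$ precisely when $|\xi'(s)|\leq 2|m|\sqrt{\xi(s)(1-\xi(s))}$. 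The substitution $\xi=\sin^2\theta$ converts this into $|\theta'|\leq|m|$, integrating to $|\arcsin\sqrt\varphi-\arcsin\sqrt\xi|\leq|m|L$.

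\emph{Step 3 (sufficiency).} Assume $L\geq|\arcsin\sqrt\varphi-\arcsin\sqrt\xi|/|m|$ and pick any causal curve $\gamma$ from $p$ to $q$ of length $L$. Take $\theta:[0,L]\to\sR$ affine from $\arcsin\sqrt\xi$ to $\arcsin\sqrt\varphi$ and set $\xi(s)\vc\sin^2\theta(s)$; then $|\theta'|\leq|m|$ and hence $|\xi'(s)|\leq 2|m|\sqrt{\xi(s)(1-\xi(s))}$. By Step 2, $F'(s)\geq 0$ for every $\mathbf{a}\in\mathcal{C}$, so integrating yields $\omega_{q,\varphi}(\mathbf{a})\geq\omega_{p,\xi}(\mathbf{a})$ for all causal $\mathbf{a}$, i.e.\ $\omega_{p,\xi}\preceq\omega_{q,\varphi}$.

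\emph{Step 4 (necessity; main obstacle).} Proposition~\ref{proppq} supplies $p\preceq q$. For the length bound I would argue by contradiction: assuming the Lorentzian distance $d_\M(p,q)$ is strictly smaller than $|\arcsin\sqrt\varphi-\arcsin\sqrt\xi|/|m|$, construct a causal test that violates $\omega_{q,\varphi}(\mathbf{a})\geq\omega_{p,\xi}(\mathbf{a})$. The candidate is engineered to realise equality in the AM--GM of Step 2 along a length-maximising causal geodesic $\gamma^*$: along $\gamma^*$ one solves the saturating ODE system $a'(t)=|\delta(t)m|\sqrt{(1-\xi(t))/\xi(t)}$, $b'(t)=|\delta(t)m|\sqrt{\xi(t)/(1-\xi(t))}$ with $\xi(t)=\sin^2(|m|t+\alpha)$; extends $(a,b)$ off $\gamma^*$ as functions of a causal time coordinate (globally admissible thanks to 2D conformal flatness); and smooths the integrable endpoint singularities at $\xi\in\{0,1\}$ via a mollifier. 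In the regularisation limit the functional $\omega_{q,\varphi}-\omega_{p,\xi}$ applied to this family tends to a quantity proportional to $|m|L-|\arcsin\sqrt\varphi-\arcsin\sqrt\xi|$, forcing the desired inequality. The main obstacle is precisely this construction: the saturating ODE is singular at $\xi\in\{0,1\}$, so the sharp threshold is only attained in the limit of increasingly singular smooth test elements, and naive smooth sinusoidal tests produce strictly weaker necessary bounds; controlling the mollification so that the limit correctly recovers the full arc-length gap $|\arcsin\sqrt\varphi-\arcsin\sqrt\xi|$ is the technical heart of the proof and is what forces exploitation of the 2D conformal flatness directly.
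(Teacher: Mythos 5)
Your Steps 1--3 are essentially the paper's own argument: the block decomposition with $\partial_\pm a,\partial_\pm b\geq 0$ plus the two determinant (cross) inequalities is Lemma \ref{propC}, and your AM--GM pairing along a proper-time parametrised timelike curve with the interpolation $\xi(s)=\sin^2\theta(s)$ is exactly what Proposition \ref{prop2dimsc} does by evaluating the matrix of Lemma \ref{propC} on the spinor $\big(\sqrt{\chi\lambda_1},\sqrt{\chi\lambda_2},-\sqrt{(1-\chi)\lambda_1}\,e^{i\delta},\sqrt{(1-\chi)\lambda_2}\,e^{i\delta}\big)$. So the sufficiency half is correct and follows the paper's route (with the minor caveat, which the paper also treats only after Proposition \ref{prop2dimnc}, that causal curves with null segments cannot be parametrised by proper time and must be handled by continuity and transitivity).

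The genuine gap is Step 4. You correctly identify that the extremal test element must saturate the cross inequalities, but you leave the construction as a singular ODE plus an uncontrolled mollification limit, and you yourself flag that controlling this limit is unresolved --- so the necessity direction is not proved. The idea you are missing is that no limiting procedure is needed: necessity is argued by contradiction, so one assumes the \emph{strict} inequality $l^m_\gamma(T)<\abs{\arcsin\sqrt{\varphi}-\arcsin\sqrt{\xi}}$, and this strict slack means the saturating profile never has to reach the singular values $\xi\in\{0,1\}$. The paper's explicit element \eqref{defab}, $a=-\tfrac12\cot\big(l^m_\gamma+\sigma\arcsin\sqrt{\xi}+\epsilon\big)$, $b=\tfrac12\tan\big(l^m_\gamma+\sigma\arcsin\sqrt{\xi}+\epsilon\big)$, has its argument confined to a compact subinterval of $\open{0,\frac{\pi}{2}}$ (the small shift $\epsilon>0$ taking care of the pure-state endpoints $\xi=0$ or $\varphi=1$), hence is smooth and bounded along $\gamma$, saturates the determinant inequalities, and violates the state ordering strictly by a direct monotonicity computation of $\cot$ and $\tan$ --- no regularisation limit and no recovery of the arc-length gap "in the limit" is required. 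What does remain to be verified, and what your sketch also omits, is that this element (suitably extended off $\gamma$) genuinely lies in $\mathcal{C}$, i.e.\ that the full $4\times 4$ matrix of Lemma \ref{propC}$(b)$ is positive semidefinite pointwise; the paper devotes Appendix A to this, computing the characteristic polynomial coefficients and invoking Vieta's formulas. Without the $\epsilon$-shift device and this positivity check, your Step 4 does not establish the "only if" part of the theorem.
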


The complete proof is based on Proposition \ref{prop2dimsc} for the sufficient condition, and on  Proposition  \ref{proppq} and Proposition \ref{prop2dimnc} for the necessary condition. 

We start with a simple technical Lemma:

\begin{lemma}
\label{propC}
Let $\mathbf{a} = \left(
\begin{smallmatrix}
a & 0
\\
0 & b
\end{smallmatrix}
\right) \in \widetilde{\mathcal{A}}$ be a~Hermitian element, then the following conditions are equivalent (we use here the notation $f_{,\mu} = \partial_\mu f = \pd{f}{x^\mu}$):
\begin{enumerate}[$(a)$] \itemsep=0pt \item \label{propC:a} $\mathbf{a}\in\mathcal{C}$, i.e.~$\forall\,\phi \in \H,\scal{\phi,\mathcal{J}[D,\mathbf{a}] \phi}
\leq 0$.
\item \label{propC:b} At every point of $\M$, the matrix
\[
 \left(
\begin{matrix} \Omega (a_{,0}+a_{,1}) &  0 &  0 &  -m(a-b)
\\
0 & \Omega (a_{,0}-a_{,1}) &  m(a-b) &  0
\\
0 &  m^*(a-b) &  \Omega (b_{,0}+b_{,1}) &  0
\\
-m^*(a-b) &  0 &  0 & \Omega (b_{,0}-b_{,1})
\end{matrix}
\right)
\]
is positive semidefinite.
\item \label{propC:c} At every point of $\M$, $\forall\, \phi_1,\phi_2,\phi_3,\phi_4 \in{\mathbb C},$
\begin{align*}
\Omega \cdot \Big( \abs{\phi_1}^2 (a_{,0}+a_{,1})+\abs{\phi_2}^2(a_{,0}-a_{,1}) + & \abs{\phi_3}^2( b_{,0}+b_{,1})+\abs{\phi_4}^2(b_{,0}-b_{,1}) \Big) \\
& \hspace*{0.5cm} \geq 2\Re\left\{(\phi_1^*\phi_4-\phi_2^*\phi_3)m\right\} \, (a-b).
\end{align*}
\end{enumerate}
\end{lemma}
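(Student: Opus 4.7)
The plan is to reduce the operator condition (a) to a pointwise positivity condition on a $4\times 4$ Hermitian matrix, which will directly match (b), and then note that (b) $\Leftrightarrow$ (c) is simply the definition of positive semidefiniteness written out on an arbitrary test vector in $\mathbb{C}^4$.

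First I would compute $[D,\mathbf{a}]$ as a sum of two pieces. The derivative part contributes
$[-i\widetilde{\gamma}^\mu\nabla^S_\mu\otimes 1,\mathbf{a}] = -i\widetilde{\gamma}^\mu\otimes\diag(a_{,\mu},b_{,\mu})$,
because $\mathbf{a}$ is a diagonal scalar multiplication operator and the Leibniz rule for the spin connection produces only the gradient. The finite part is pointwise as well: since $\gamma^0\gamma^1$ commutes with multiplication by a scalar function, $[\gamma^0\gamma^1\otimes D_F,\mathbf{a}] = \gamma^0\gamma^1\otimes [D_F,\mathbf{a}]$ and a direct computation gives $[D_F,\mathbf{a}] = (a-b)\left(\begin{smallmatrix}0 & -m\\ m^* & 0\end{smallmatrix}\right)$. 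Thus $[D,\mathbf{a}]$ contains no differential operator: it is multiplication by a matrix-valued function on $\M$.

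Next I would left-multiply by $\mathcal{J}=i\gamma^0\otimes 1$ and simplify using the flat-gamma identities $(\gamma^0)^2=-1$, $\gamma^0\widetilde{\gamma}^0=-\Omega$, $\gamma^0\widetilde{\gamma}^1=\Omega\gamma^0\gamma^1$ and $\gamma^0(\gamma^0\gamma^1)=-\gamma^1$. Because $\mathcal{J}[D,\mathbf{a}]$ is a multiplication operator by a Hermitian matrix-valued function, the pairing splits as $\scal{\phi,\mathcal{J}[D,\mathbf{a}]\phi}=\int_\M \phi^\dagger(x)\,(\mathcal{J}[D,\mathbf{a}])(x)\,\phi(x)\,d\mu(x)$. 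The inequality in (a) for \emph{all} $\phi\in\H$ is then equivalent to the pointwise statement that $-\mathcal{J}[D,\mathbf{a}](x)$ is positive semidefinite at every $x\in\M$: the $\Leftarrow$ direction is immediate, and for $\Rightarrow$, if positivity failed at some $x_0$ one could concentrate $\phi$ in a small neighbourhood of $x_0$ (a cutoff times an eigenvector of the bad direction) to contradict (a); this uses continuity of $a,b$ and their first derivatives.

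I would then choose a concrete orthonormal frame for the spinor bundle bringing the flat gammas into Pauli form, e.g.\ $\gamma^0=i\sigma^2$, $\gamma^1=\sigma^1$, so that $\gamma^0\gamma^1=\sigma^3$ is diagonal. Expanding $-\mathcal{J}[D,\mathbf{a}]$ in the combined basis of $\H_\M\otimes\H_F$ and reordering the four basis vectors in the tensor product produces exactly the block displayed in (b): the diagonal blocks $\Omega\,\mathrm{diag}(a_{,0}+a_{,1},a_{,0}-a_{,1})$ and $\Omega\,\mathrm{diag}(b_{,0}+b_{,1},b_{,0}-b_{,1})$ come from the differential term acting on the two sheets separately, while the off-diagonal entries $\pm m(a-b)$ and $\pm m^*(a-b)$ come from the finite Dirac operator after the $-\gamma^1$ factor is diagonalised by the chosen basis. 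Finally, (b) $\Leftrightarrow$ (c) is just the definition of positive semidefiniteness of a Hermitian $4\times 4$ matrix tested on an arbitrary vector $(\phi_1,\phi_2,\phi_3,\phi_4)\in\mathbb{C}^4$.

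The only real obstacle is bookkeeping: keeping the $(-,+)$ signature and the Hermiticity conventions for $\gamma^0,\gamma^1$ consistent through every step, tracking the $i$ from $\mathcal{J}$ that combines with the $-i$ from $D_\M$, and fixing an ordering of the tensor basis so that the resulting matrix appears exactly as written in (b). There is no conceptual difficulty beyond these sign and basis conventions; once the computation is carried out the equivalences are automatic.
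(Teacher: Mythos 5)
Your proposal is correct and follows essentially the same route as the paper: compute $[D,\mathbf{a}]$ as pointwise multiplication by a matrix-valued function, reduce the global inequality in (a) to pointwise positive semidefiniteness of $-\mathcal{J}[D,\mathbf{a}]$ by exactly the same localisation argument, write the matrix in an explicit two-dimensional gamma representation to obtain (b), and read (c) off as the quadratic form of (b) on an arbitrary vector of $\mathbb{C}^4$. The only cosmetic difference is the representation: the paper uses the chiral choice $\gamma^0=\left(\begin{smallmatrix}0&i\\i&0\end{smallmatrix}\right)$, $\gamma^1=\left(\begin{smallmatrix}0&-i\\i&0\end{smallmatrix}\right)$, for which the displayed matrix appears verbatim, whereas with your choice $\gamma^0=i\sigma^2$, $\gamma^1=\sigma^1$ the off-diagonal entries come out as $\pm i\,m(a-b)$, so matching (b) exactly needs a diagonal phase conjugation in addition to reordering the basis --- which, of course, changes nothing about positive semidefiniteness or the equivalence.
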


\begin{proof}
Let us observe that the condition $\forall\,\phi \in \H$, $\scal{\phi,\mathcal{J}[D,\mathbf{a}] \phi} \leq 0$ is equivalent to having \mbox{$\mathcal{J}[D,\mathbf{a}] \leq 0$} at every point of $\M$, i.e.~$-\mathcal{J}[D,\mathbf{a}]$ is a~positive semi-definite matrix at every point of~$\M$. The second condition implies trivially the first one, and if the second one is false at some particular point $p\in\M$, then by continuity it is false on some open neighbourhood $U_p\subset\M$ and the first condition is false for some non-null spinor $\phi \in \H$ the support of which is included in~$U_p$.

From the Definition \ref{def:twosheeted} we read
\[
-\mathcal{J}[D,\mathbf{a}]=\left(
\begin{matrix}
-\gamma^0\widetilde\gamma^\mu\partial_\mu a &  i \,m(a-b) \gamma^0\gamma_\M
\\
-i \,m^*(a-b) \gamma^0\gamma_\M &  -\gamma^0\widetilde\gamma^\mu\partial_\mu b
\end{matrix}
\right) \, \cdot
\]

We chose the following \emph{chiral} representation of the 2-dimensional Dirac matrices:
\[
\gamma^0=  \left(
\begin{matrix}
0 &  i
\\
i &  0
\end{matrix}
\right),
\qquad
\gamma^1=  \left(
\begin{matrix}
0 &  -i
\\
i &  0
\end{matrix}
\right),
\qquad
\gamma_\M = \gamma^0\gamma^1=\left(
\begin{matrix}
-1 &  0
\\
0 &  1
\end{matrix}
\right)\cdot
\]

Since $\widetilde{\gamma}^{\mu} = \Omega \gamma^{\mu}$, the point \eqref{propC:b} follows.
The condition \eqref{propC:c} is just a~reformulation of the condition \eqref{propC:b} with an arbitrary vector $\phi =
(\phi_1,\phi_2,\phi_3,\phi_4) \in \sC^4$.

\end{proof}

\begin{lemma} \label{propcausalel}
If $\mathbf{a} = \left(
\begin{smallmatrix}
a & 0
\\
0 & b
\end{smallmatrix}
\right) 
\in\mathcal{C}$, then $a$ and $b$ are causal functions on $\M$.
\end{lemma}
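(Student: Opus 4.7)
The plan is a direct deduction from Lemma \ref{propC}. First I would invoke condition \eqref{propC:b}, which asserts that the displayed $4\times 4$ matrix is positive semidefinite at every point of $\M$. Since every diagonal entry of a positive semidefinite matrix must be nonnegative, and since the conformal factor $\Omega$ is strictly positive, this immediately yields the four pointwise inequalities $a_{,0}+a_{,1}\geq 0$, $a_{,0}-a_{,1}\geq 0$, $b_{,0}+b_{,1}\geq 0$, $b_{,0}-b_{,1}\geq 0$, i.e.\ $a_{,0}\geq|a_{,1}|$ and $b_{,0}\geq|b_{,1}|$ everywhere on $\M$.

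Next I would interpret these inequalities in terms of the causal structure. Because the $2$-dimensional metric is conformally flat, its future causal cones coincide pointwise with the Minkowski ones, so any future-directed causal tangent vector $\dot\gamma^\mu$ satisfies $\dot\gamma^0\geq|\dot\gamma^1|$. Hence along any future-directed causal curve $\gamma(\tau)$ in $\M$,
\begin{equation*}
\tfrac{d}{d\tau}\,a(\gamma(\tau)) \eq a_{,0}\,\dot\gamma^0 + a_{,1}\,\dot\gamma^1 \;\geq\; (a_{,0}-|a_{,1}|)\,\dot\gamma^0 \;\geq\; 0,
\end{equation*}
so $a$ is nondecreasing along future-directed causal curves, which is the definition of a causal function on $\M$. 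The identical argument applied to the last two diagonal entries yields the same conclusion for $b$.

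An equivalent path would go through condition \eqref{propC:c}: a test vector of the form $(\phi_1,\phi_2,0,0)$ annihilates the right-hand cross-term $2\Re\{(\phi_1^*\phi_4-\phi_2^*\phi_3)m\}\,(a-b)$ and leaves exactly the pointwise form of the commutative causality condition $\scal{\phi_\M,\J_\M[D_\M,a]\phi_\M}\leq 0$; by \cite[Theorem 11]{CQG2013}, already invoked in the previous proof, this forces $a\in\C_\M$, and symmetrically $b\in\C_\M$ via $(0,0,\phi_3,\phi_4)$. There is no substantive obstacle: the only point to notice is that the off-diagonal coupling $m(a-b)$ can be neutralised by a suitable choice of test spinor (equivalently, that the two $2\times 2$ diagonal blocks of the matrix in Lemma \ref{propC} are themselves positive semidefinite), so the causality of each component on $\M$ is enforced independently.
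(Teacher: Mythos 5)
Your proof is correct and takes essentially the paper's route: reading off the diagonal entries of the matrix in Lemma \ref{propC}\eqref{propC:b} is exactly the paper's choice of test vectors $\phi_1=1$ (resp.\ $\phi_2=1$) in \eqref{propC:c}, and both arguments reduce the claim to the pointwise inequalities $a_{,0}\pm a_{,1}\geq 0$ (and likewise for $b$). The only difference is in the final step --- the paper uses a second choice of test vector to exhibit a causal gradient ($g^{\mu\nu}a_{,\mu}a_{,\nu}\leq 0$ with $a_{,0}\geq 0$), while you integrate the same inequalities directly along a future-directed causal curve using the conformal flatness of the cones --- and both finishes are valid.
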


\begin{proof}
Using Lemma \ref{propC}\eqref{propC:c} with	$\phi_1=1$ and $\phi_2=\phi_3=\phi_4=0$, we find that $a_{,0} + a_{,1} \geq 0$, and similarly $a_{,0} - a_{,1} \geq 0$ with $\phi_2=1$. So we are allowed to set:
\[
\abs{\phi_1}^2 = \tfrac{\Omega}{2} \left( a_{,0} - a_{,1} \right),\quad \abs{\phi_2}^2 = \tfrac{\Omega}{2} \left( a_{,0} + a_{,1} \right), \quad \phi_3=\phi_4=0,
\]
which gives  $\Omega^2 \left( -(a_{,0})^2 + (a_{,1})^2 \right) = g^{\mu\nu} a_{,\mu} a_{,\nu} \leq 0$ and the function $a$ has a timelike or null gradient. Setting $\phi_1=\phi_2=1$ and $\phi_3=\phi_4=0$ gives $\Omega a_{,0} \geq 0$, so the function is causal. A similar reasoning can be done for $b$.
\end{proof}

We can now prove the sufficient condition of Theorem \ref{thm2d}.

\begin{proposition}\label{prop2dimsc}
Let us suppose that $p \preceq q$ with $\gamma$ a~future directed timelike curve going from $p$ to $q$. We consider two states $\omega_{p,\xi}, \omega_{q,\varphi} \in \N(\A)$.  If 
\[
l(\gamma)  \geq \frac{\abs{\arcsin\sqrt{\varphi}-\arcsin\sqrt{\xi}}}{\abs{m}}
,\]
then $\omega_{p,\xi} \preceq \omega_{q,\varphi}$.
\end{proposition}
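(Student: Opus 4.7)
The plan is to construct, for any fixed causal element $\mathbf{a}=\left(\begin{smallmatrix} a & 0 \\ 0 & b\end{smallmatrix}\right)\in\mathcal{C}$, a monotone non-decreasing interpolant along the future-directed timelike curve $\gamma$ between $\omega_{p,\xi}(\mathbf{a})$ and $\omega_{q,\varphi}(\mathbf{a})$. By Lemma \ref{propcondcausal} this establishes $\omega_{p,\xi}\preceq\omega_{q,\varphi}$. Parametrising $\gamma:[0,T]\to\M$ by proper time ($T=l(\gamma)$), the interpolant to try is
\[
h(s) \vc \lambda(s)^2\, a(\gamma(s)) + (1-\lambda(s)^2)\, b(\gamma(s)),
\]
with a function $\lambda:[0,T]\to[0,1]$ to be chosen, satisfying $\lambda(0)^2=\xi$ and $\lambda(T)^2=\varphi$ so that $h(T)-h(0) = \omega_{q,\varphi}(\mathbf{a})-\omega_{p,\xi}(\mathbf{a})$. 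Using 2-dimensional conformal flatness $g^{\mu\nu}=\Omega^2\eta^{\mu\nu}$, I would write the unit tangent via a rapidity $\psi(s)$ as $\dot\gamma^0 = \Omega\cosh\psi$, $\dot\gamma^1 = \Omega\sinh\psi$, so that
\[
(a\circ\gamma)'(s) = \tfrac{1}{2}\Omega\bigl[e^{\psi}(a_{,0}+a_{,1}) + e^{-\psi}(a_{,0}-a_{,1})\bigr],
\]
and similarly for $b$.

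For the interpolation function I would take $\lambda(s) = \sin\!\bigl(\arcsin\sqrt\xi + \alpha s\bigr)$ with $\alpha = (\arcsin\sqrt\varphi - \arcsin\sqrt\xi)/T$; by the standing hypothesis on $l(\gamma)$ this gives $|\alpha|\leq|m|$, and $\lambda$ consequently satisfies the differential bound $|\lambda'|\leq|m|\sqrt{1-\lambda^2}$ with $\lambda\in[0,1]$. The heart of the argument is then to apply Lemma \ref{propC}(c) at the point $\gamma(s)$ with the rapidity-weighted ansatz
\[
|\phi_1|^2 = \tfrac{\lambda^2 e^{\psi}}{2},\quad |\phi_2|^2 = \tfrac{\lambda^2 e^{-\psi}}{2},\quad |\phi_3|^2 = \tfrac{(1-\lambda^2)e^{\psi}}{2},\quad |\phi_4|^2 = \tfrac{(1-\lambda^2)e^{-\psi}}{2}.
\]
This collapses the LHS of Lemma \ref{propC}(c) exactly to $\lambda^2(a\circ\gamma)' + (1-\lambda^2)(b\circ\gamma)'$, while the phases of $\phi_1^*\phi_4$ and $\phi_2^*\phi_3$ may be adjusted independently to make $2\Re\{(\phi_1^*\phi_4-\phi_2^*\phi_3)m\}$ attain either sign of its maximum modulus $2|m|\lambda\sqrt{1-\lambda^2}$, yielding the pointwise bound
\[
\lambda^2(a\circ\gamma)'(s) + (1-\lambda^2)(b\circ\gamma)'(s) \;\geq\; 2|m|\lambda\sqrt{1-\lambda^2}\,\bigl|a-b\bigr|.
\]
Expanding $h'(s) = 2\lambda\lambda'(a-b) + \lambda^2(a\circ\gamma)' + (1-\lambda^2)(b\circ\gamma)'$ and splitting on the sign of $a-b$, the inequality $|\lambda'|\leq|m|\sqrt{1-\lambda^2}$ is precisely what is needed to conclude $h'(s)\geq 0$ in both cases; integration then completes the proof.

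The main obstacle is identifying the correct ansatz. The rapidity-weighted moduli $|\phi_i|^2$ are essentially forced by demanding that the LHS of Lemma \ref{propC}(c) reproduce the convex combination of proper-time derivatives along $\gamma$, and the sinusoidal $\lambda$ arises naturally as the saturating solution of the differential inequality $|\lambda'|\leq|m|\sqrt{1-\lambda^2}$, which is itself the sharp threshold for $h'(s)\geq 0$. Once these are in place the remaining algebra is routine. A subtle but essential point is the independence of $\arg(\phi_1^*\phi_4)$ and $\arg(\phi_2^*\phi_3)$: without it one could only optimise the RHS of Lemma \ref{propC}(c) to a strictly smaller value, degrading the bound and the corresponding causal threshold.
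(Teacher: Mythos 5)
Your proof is correct and follows essentially the same route as the paper's: the interpolant $\lambda^2=\sin^2(\arcsin\sqrt{\xi}+\alpha s)$ is the paper's function $\chi$, your rapidity-weighted moduli reproduce the paper's choice $\phi_1=\sqrt{\chi\lambda_1}$, $\phi_2=\sqrt{\chi\lambda_2}$, $\phi_3=-\sqrt{(1-\chi)\lambda_1}e^{i\delta}$, $\phi_4=\sqrt{(1-\chi)\lambda_2}e^{i\delta}$ in Lemma \ref{propC}$(c)$, and the pointwise bound plus integration is identical. The only (harmless) deviation is that you absorb the case $l(\gamma)$ strictly larger than the threshold by taking $|\alpha|\leq|m|$ directly, whereas the paper treats the saturated case and then invokes transitivity together with Lemma \ref{propcausalel}.
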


\vspace*{0.3cm}

\begin{proof}
Let us first take a~timelike curve $\gamma: [0,T] \to \M$, with $\gamma(0)=p$, $\gamma(T)=q$, $T>0$ and 
\[ l(\gamma) = \frac{\abs{\arcsin\sqrt{\varphi}-\arcsin\sqrt{\xi}}}{\abs{m}}.\]
We shall use the following notation:
\begin{align}
l^m_\gamma(t) & \vc  \int_0^t \abs{m} \sqrt{-g_{\gamma(s)}(\dot \gamma(s),\dot \gamma(s))}  \, ds \label{lm}\\
& = \int_0^t \abs{m} \big(\Omega(\gamma(s)) \big)^{-1} \sqrt{ (\dot\gamma^0(s))^2 - (\dot\gamma^1(s))^2}  \, ds \notag
\end{align}
representing the length of~$\gamma$ restricted to the interval $[0,t]$ and multiplied by $\abs{m}$, and denote $\sigma \vc \sgn\left(\arcsin\sqrt{\varphi}-\arcsin\sqrt{\xi}\right) \in \set{+1,-1}$.

Our hypothesis is then 
\[\sigma\,l^m_\gamma(T) = \arcsin\sqrt{\varphi}-\arcsin\sqrt{\xi}.\]

We define the following function for $t\in[0,T]$:
\begin{align}\label{chi}
\chi(t) \vc \sin^2\left( \sigma\,l^m_\gamma(t) + \arcsin \sqrt{\xi}  \right),
\end{align}
which respects $\chi(0) = \xi$ and $\chi(T) = \varphi$.\\

Let us take an arbitrary $\mathbf{a} = \left(\begin{smallmatrix} a & 0\\0 & b\end{smallmatrix}\right) \in \mathcal{C}$. We have, using the second fundamental theorem of calculus:
\begin{eqnarray}
&&\varphi\,a(q) - \xi\, a(p) + (1-\varphi)\, b(q) - (1-\xi)\, b(p) \label{eqgengoal}\\
&=&\chi(T)\, a(\gamma(T)) - \chi(0)\, a(\gamma(0)) + (1-\chi(T)) \,b(\gamma(T)) - (1-\chi(0)) \,b(\gamma(0))\nonumber\\
&=& \int_0^T  \left(\chi\cdot (a \circ \gamma) \right)^\prime(t) + \left((1-\chi)\cdot (b \circ \gamma) \right)^\prime(t) \;dt\nonumber\\
&=& \int_0^T  \chi^\prime (t)\, (a \circ \gamma) (t) + \chi(t)\,(a\circ\gamma)^\prime(t) - \chi^\prime (t)\, (b \circ \gamma) (t) + (1-\chi(t))\,(b\circ\gamma)^\prime(t)\;dt\nonumber\\
&=& \int_0^T  \chi^\prime (t)\, (a-b)(\gamma(t)) \;dt + \int_0^T \left[ \chi(t)\,(a\circ\gamma)^\prime(t)  + (1-\chi(t))\,(b\circ\gamma)^\prime(t) \right] \;dt. \label{eqgoal}
\end{eqnarray}

According to Lemma \ref{propcondcausal}, we need to prove that \eqref{eqgengoal} is always non-negative.\\

At first, we consider the integrand of the second term in \eqref{eqgoal}:
\begin{align}
\chi \cdot (a\circ\gamma)^\prime  + (1-\chi) \cdot (b\circ\gamma)^\prime & = \chi \cdot ( \dot \gamma^0 a_{,0} + \dot \gamma^1 a_{,1}) + (1-\chi) \cdot ( \dot \gamma^0 b_{,0} + \dot \gamma^1 b_{,1}) \nonumber\\
&= \Omega \cdot \Big( \chi \cdot \left[ \lambda_1 (a_{,0}+a_{,1}) + \lambda_2 (a_{,0}-a_{,1}) \right] \nonumber\\
& \qquad + (1-\chi) \cdot \left[ \lambda_1 (b_{,0}+b_{,1}) + \lambda_2 (b_{,0}-b_{,1}) \right] \Big)\, , \label{eq2ft}
\end{align}
where the arguments $t$ and $\gamma(t)$ are omitted in order to have more readable expressions and using the following functions
\begin{equation}\label{defl1l2}
\lambda_1 = \frac{\dot{\gamma}^{0} + \dot{\gamma}^{1}}{2 \Omega} \qquad \text{and}\qquad \lambda_2 = \frac{\dot{\gamma}^{0} - \dot{\gamma}^{1}}{2 \Omega},
\end{equation}
which are positive since $\gamma$ is timelike. 

Then, by applying Lemma \ref{propC}$(c)$ in \eqref{eq2ft} with
\[ \phi_1 = \sqrt{\chi \lambda_1},\qquad \phi_2 = \sqrt{\chi \lambda_2},\qquad\phi_3 = -\sqrt{(1-\chi)  \lambda_1} \,e^{i \delta},\qquad\phi_4 = \sqrt{(1-\chi) \lambda_2} \,e^{i\delta},\]
we find the inequality:
\begin{align}
\chi \cdot(a\circ\gamma)^\prime  + (1-\chi) \cdot (b\circ\gamma)^\prime & \geq \sqrt{\chi(1-\chi)}  \sqrt{\lambda_1 \lambda_2}\; 4 \Re\left\{e^{i\delta}m(a-b)\right\}  \label{ineq2d}\\
&= 2\sqrt{\chi(1-\chi)} \abs{m}  2\sqrt{\lambda_1 \lambda_2}\;  \abs{a-b} \label{eqterm2}
\end{align}
if $\delta$ is chosen such that $e^{i\delta}m(a-b) =  \abs{m}\abs{a-b}$.

Then, by looking at the integrand of the first term in \eqref{eqgoal} we find:
\begin{align}
\chi^\prime \cdot (a-b) & = \sigma \;2 \sin\left( \sigma\,l^m_\gamma + \arcsin \sqrt{\xi}  \right) \cos\left( \sigma\,l^m_\gamma + \arcsin \sqrt{\xi}  \right)  \,\left(l^m_\gamma\right)^\prime (a-b)\nonumber\\
&= \sigma \;2 \sqrt{\chi(1-\chi)} \,\left(l^m_\gamma\right)^\prime (a-b)  \label{eqterm1}.
\end{align}

In order to show that \eqref{eqgoal} is non-negative using \eqref{eqterm2} and \eqref{eqterm1}, it is sufficient to check that:
\begin{equation} \label{rell1l2}
\abs{m} 2\sqrt{\lambda_1 \lambda_2} = \abs{m} \Omega^{-1} \sqrt{(\dot\gamma^0)^2 - (\dot\gamma^1)^2} = \left(l^m_\gamma\right)^\prime . 
\end{equation} 

The proof will be complete if we show that the result remains true under the general hypothesis
\[ l(\gamma) \geq \frac{\abs{\arcsin\sqrt{\varphi}-\arcsin\sqrt{\xi}}}{\abs{m}}.\]
From the transitivity of the causal order, it is sufficient to prove that $\omega_{p,\varphi} \preceq \omega_{q,\varphi}$ (i.e.~with $\varphi=\xi$) for every future directed timelike curve with $l(\gamma) \geq 0$. In this case, \eqref{eqgengoal} becomes:
\[
\varphi\,(a(q) - a(p)) + (1-\varphi)\, (b(q) -  b(p)) \geq 0,
\]
which is non-negative since $a$ and $b$ are causal functions from Lemma \ref{propcausalel}.
\end{proof}

The necessary condition of Theorem \ref{thm2d} is given by this following proposition:

\begin{proposition}\label{prop2dimnc}
Let us suppose that $p \preceq q$ and consider two states $\omega_{p,\xi}, \omega_{q,\varphi} \in \N(\A)$ with $\varphi\neq\xi$. If $\omega_{p,\xi} \preceq \omega_{q,\varphi}$, then there exists a~future directed timelike curve $\gamma$
going from $p$ to $q$ such that 
\[
l(\gamma)  \geq \frac{\abs{\arcsin\sqrt{\varphi}-\arcsin\sqrt{\xi}}}{\abs{m}}\, .
,\]
\end{proposition}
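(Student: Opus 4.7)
The proof proceeds by contraposition: assume $p \preceq q$ on $\M$ and that every future directed timelike curve $\gamma$ from $p$ to $q$ satisfies $l(\gamma) < \delta := \frac{\abs{\arcsin\sqrt\varphi - \arcsin\sqrt\xi}}{\abs{m}}$. I shall exhibit a causal element $\mathbf{a} = \left(\begin{smallmatrix} a & 0 \\ 0 & b\end{smallmatrix}\right) \in \mathcal{C}$ with $\omega_{q,\varphi}(\mathbf{a}) < \omega_{p,\xi}(\mathbf{a})$, contradicting $\omega_{p,\xi} \preceq \omega_{q,\varphi}$ via Lemma \ref{propcondcausal}. When $p$ and $q$ are timelike related, the Avez--Seifert theorem (a consequence of global hyperbolicity) supplies a maximising timelike geodesic $\gamma_0$ of length $L = d(p,q) < \delta$ to which the construction will be adapted; when they are only null-related or coincide, $L = 0$ and the construction is localised to a neighbourhood of $p$.

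The strategy is to reverse-engineer the chain of inequalities in the proof of Proposition \ref{prop2dimsc}, producing a pair $(a, b)$ that saturates the key inequality \eqref{ineq2d} along $\gamma_0$. Working in null coordinates $(u, v) = (x^0 - x^1, x^0 + x^1)$ adapted to the conformally flat 2D metric, the matrix positivity of Lemma \ref{propC}\eqref{propC:b} decouples into
\[
4\Omega^2 a_v b_u \geq \abs{m}^2(a-b)^2, \qquad 4\Omega^2 a_u b_v \geq \abs{m}^2(a-b)^2,
\]
together with $a_u, a_v, b_u, b_v \geq 0$ (causality). One then builds $(a, b)$ out of a smooth function $\Theta$ solving the eikonal equation $\Omega^2(\Theta_{,0}^2 - \Theta_{,1}^2) = \abs{m}^2$ with future-directed timelike gradient. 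By the method of characteristics, such a $\Theta$ exists locally and its characteristics are precisely the timelike geodesics of $g$, along which $\Theta$ grows by $\abs{m}$ per unit proper time; in particular $\Theta(\gamma_0(t)) - \Theta(p) = \abs{m} t$. A suitable parametric trigonometric ansatz expressing $a$ and $b$ in terms of $\Theta$ and an auxiliary parameter controlling the difference $a - b$ is designed to satisfy both matrix inequalities (with saturation along $\gamma_0$) and to keep $a$ and $b$ non-decreasing along causal curves. The closed-form evaluation of $\omega_{q,\varphi}(\mathbf{a}) - \omega_{p,\xi}(\mathbf{a})$ then reduces to a trigonometric expression whose sign is controlled by the length deficit $\delta - L > 0$, yielding the required negativity.

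Once built on a tubular neighbourhood of $\gamma_0$, the pair $(a, b)$ is extended to all of $\M$ by a smooth cutoff performed in a region where the two matrix inequalities are strict, so the cutoff does not destroy the matrix positivity; the resulting $\mathbf{a}$ then lies in $\widetilde{\mathcal{A}} \cap \mathcal{C}$. The main obstacle is identifying an ansatz for $(a, b)$ with enough flexibility: a naive one-parameter choice such as $b = a - \text{const}$ is too rigid at the endpoints, and the superficially appealing \emph{complementary} option $b = \cos^2\Theta$ versus $a = \sin^2\Theta$ destroys the causality of one of the two functions (since $\cos^2$ decreases precisely where $\sin^2$ increases). The correct ansatz must make $a$ and $b$ share the monotone phase $\Theta$ while differing through a secondary offset that tunes $a - b$ to the saturation regime of \eqref{ineq2d}; the fact that such an ansatz exists is closely tied to the decoupling of the matrix condition into two independent $2 \times 2$ blocks, a feature special to dimension 2 and explaining why this section treats the 2D case separately.
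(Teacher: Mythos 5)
Your overall strategy is the right one and matches the paper's: argue by contradiction/contraposition, exploit the $2\times 2$ block decoupling of the positivity condition of Lemma \ref{propC} (your null-coordinate inequalities $4\Omega^2 a_v b_u \geq \abs{m}^2(a-b)^2$ and $4\Omega^2 a_u b_v \geq \abs{m}^2(a-b)^2$ are exactly the two blocks of Lemma \ref{propC}\eqref{propC:b}), and exhibit an explicit causal element, built from a phase that grows at rate $\abs{m}$ per unit proper time along the curve, which violates the state inequality of Lemma \ref{propcondcausal}. However, there is a genuine gap: the decisive object --- the explicit pair $(a,b)$ --- is never produced. You state that ``a suitable parametric trigonometric ansatz \ldots is designed to satisfy both matrix inequalities'' and your closing paragraph only describes properties the ansatz \emph{must} have, after correctly observing that $a=\sin^2\Theta$, $b=\cos^2\Theta$ fails (one of the two is decreasing) and that $b=a-\mathrm{const}$ is too rigid. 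This is precisely the nontrivial content of the paper's proof: the choice \eqref{defab}, namely $a(\gamma(t)) = -\tfrac12\cot\theta$ and $b(\gamma(t)) = \tfrac12\tan\theta$ with $\theta = l^m_\gamma(t)+\sigma\arcsin\sqrt{\xi}+\epsilon$ kept inside $\left]0,\tfrac\pi2\right[$ or $\left]-\tfrac\pi2,0\right[$, for which both functions are increasing in $\theta$, $a-b=-\csc 2\theta$, and the block inequalities are saturated; its causality is then verified in Appendix A by checking non-negativity of the coefficients of the characteristic polynomial of the $4\times4$ matrix. Without this (or an equivalent) explicit construction and its verification, the core of the argument is missing, and the final step --- the closed-form evaluation showing $\varphi\,a(q)-\xi\,a(p)+(1-\varphi)\,b(q)-(1-\xi)\,b(p)<0$, using $\cot(\sigma\arcsin\sqrt{\xi})$, $\tan(\sigma\arcsin\sqrt{\xi})$ and the monotonicity of $\tan$ and $-\cot$ --- cannot be carried out, nor can the separate treatment of the pure-state endpoints $\xi\in\{0,1\}$, $\varphi\in\{0,1\}$, where the shift $\epsilon>0$ is essential because $\cot$ or $\tan$ would otherwise blow up at the endpoint.

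Two smaller remarks. First, your framing via the Avez--Seifert maximising geodesic is a reasonable refinement (the paper instead works with an arbitrary timelike curve violating the bound and defines the element along it, invoking an implicit smooth extension to $\M$); your eikonal-equation construction of the phase $\Theta$ is one way to make that extension precise, and is compatible with the paper's Appendix A computation, which only fixes the gradient of $a$ and $b$ along the curve. Second, the cutoff extension you propose must be handled with care, since the paper's element saturates the block inequalities along the curve (there is no strict slack there to absorb a cutoff), so the localisation has to be performed transversally to the curve; but this is a technicality compared with the missing ansatz itself.
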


\begin{proof}
Let us consider an arbitrary future directed timelike curve $\gamma$ such that $\gamma(0)=p$ and
$\gamma(T)=q$ and suppose by contradiction that 
\begin{equation}\label{assum}
l^m_\gamma(T) <\sigma\, \left( \arcsin\sqrt{\varphi}-\arcsin\sqrt{\xi} \right),
\end{equation}
with $l^m_\gamma(t) =  \int_0^t \abs{m} \sqrt{-g_{\gamma(s)}(\dot \gamma(s),\dot \gamma(s))}  ds$ and $\sigma = \sgn\left(\arcsin\sqrt{\varphi}-\arcsin\sqrt{\xi}\right) $.

Using Lemma \ref{propcondcausal}, we need to demonstrate that there exists at least one causal element $\mathbf{a} = \left(
\begin{smallmatrix}
a & 0
\\
0 & b
\end{smallmatrix}
\right) \in \mathcal{C}$ such that 
\begin{equation}\label{goalstrict}
\varphi\,a(q) - \xi\, a(p) + (1-\varphi)\, b(q) - (1-\xi)\, b(p) < 0.
\end{equation}

Such an element is explicitly defined along the curve $\gamma$ (and we consider its implicit smooth extension to $\M$) in the following way:
\begin{eqnarray}
a(\gamma(t)) &=&  - \frac 12 \cot( l^m_\gamma(t) + \sigma \arcsin\sqrt{\xi} + \epsilon), \nonumber \\
b(\gamma(t)) &=&   \frac 12 \tan( l^m_\gamma(t) + \sigma \arcsin\sqrt{\xi} + \epsilon), \label{defab}
\end{eqnarray}  
where $\epsilon\geq0$ is chosen, if needed, such that the argument lives inside the interval $]0,\frac \pi 2[$ or $]-\frac \pi 2,0[$ (such an $\epsilon$ always exists since $\forall t\in[0,T], \abs{l^m_\gamma(t) + \sigma \arcsin\sqrt{\xi}} < \frac \pi 2$ by the assumption).

The proof that $\mathbf{a} = \left(
\begin{smallmatrix}
a & 0
\\
0 & b
\end{smallmatrix}\right)$ respects the conditions of a causal element is a technical computation detailed in Appendix A.

Let us first consider the non-pure states case with $\varphi,\xi \in ]0,1[$, where we can choose $\epsilon=0$. We have, using our assumption \eqref{assum} and the increasing behaviour of the tangent and $-$cotangent functions:
\begin{eqnarray*}
\varphi\, a(q) &<& -\varphi\, \frac 12  \cot( \sigma\arcsin\sqrt{\varphi}) = -\sigma \frac 12 \sqrt{\varphi(1-\varphi)},\\
-\xi\, a(p) &=& \xi\, \frac 12  \cot( \sigma\arcsin\sqrt{\xi}) = \sigma \frac 12 \sqrt{\xi(1-\xi)},\\
(1-\varphi)\, b(q) &<& (1-\varphi)\, \frac 12  \tan( \sigma\arcsin\sqrt{\varphi}) = \sigma \frac 12 \sqrt{\varphi(1-\varphi)},\\
-(1-\xi)\, b(p) &=& -(1-\xi)\, \frac 12  \tan( \sigma\arcsin\sqrt{\xi}) = -\sigma \frac 12 \sqrt{\xi(1-\xi)},
\end{eqnarray*} 
and therefore the strict inequality \eqref{goalstrict} is respected.

The cases involving pure states must be treated separately, using $\epsilon>0$ such that $l^m_\gamma(t) + \sigma \arcsin\sqrt{\xi} + \epsilon < \sigma \arcsin\sqrt{\varphi}$:
\begin{itemize}
\item for $\xi=0$ and $\varphi\neq1$,
\begin{flalign*}
&\eqref{goalstrict} < - \frac 12 \sqrt{\varphi(1-\varphi)} - 0 +  \frac 12 \sqrt{\varphi(1-\varphi)} - \frac 12 \tan \epsilon < 0;&
\end{flalign*}	
\item for $\xi\neq0$ and $\varphi=1$,
\begin{flalign*}
&\eqref{goalstrict} = - \frac 12  \cot( l^m_\gamma(t) + \arcsin\sqrt{\xi} + \epsilon) +  \frac 12 \sqrt{\xi(1-\xi)} + 0 - \frac 12 \sqrt{\xi(1-\xi)}< 0;&
\end{flalign*}
\item for $\xi=0$ and $\varphi=1$,
\begin{flalign*}
&\eqref{goalstrict} = - \frac 12  \cot( l^m_\gamma(t) + \arcsin\sqrt{\xi} + \epsilon) -0 + 0 - \frac 12 \tan \epsilon< 0&
\end{flalign*}
\end{itemize}
and the symmetric cases (with $\sigma=-1$) are treated in a similar way.
\end{proof}

With this proposition, Theorem \ref{thm2d} is proven for curves which are everywhere timelike. When the curve is null or partially null, the result follows from continuity and transitivity of the causal order. In fact, if a point $q$ is localised at the boundary of the light cone of a point $p$ (i.e.~every causal curve $\gamma$ from $p$ to $q$ has length $l(\gamma)=0$), the only possibility to get a causal relation between states $\omega_{p,\xi} \preceq \omega_{q,\varphi}$ implies $\varphi=\xi$ (so no movement in the internal space, or no causal relation between the two sheets in the case of pure states). Indeed, if a causal relation was possible with $\varphi\neq\xi$, then there would exist a point $q'$ in the future of $q$ and $p$ and two (maximal) timelike curves $\gamma_p$ and $\gamma_q$ from $p$ and $q$ respectively to $q'$, with similar lengths, i.e.~$\abs{l(\gamma_p)-l(\gamma_q)} < \delta$, for some $\delta > 0$, due to the continuity of the Lorentzian distance in globally hyperbolic space-times \cite{Beem}. Since $q'$ can be infinitely near $q$, $\delta$ can be small enough to yield $\abs{m} \delta < \abs{\arcsin\sqrt{\varphi}-\arcsin\sqrt{\xi}}$. Then, the movement in the internal space would be more important along the curve $\gamma$ followed by $\gamma_q$ than along $\gamma_p$, which contradicts the transitivity of the causal order between states as proved in \cite{CQG2013}.

\section{The four-dimensional case}\label{sec4dim}

We now outline how to extend the proof of Theorem \ref{thm2d} to higher dimensional two-sheeted space-times. The key element was point \eqref{propC:c} of Lemma \ref{propC}, which translates the condition \eqref{condcausal} into an inequality linear in the derivatives of $a$ and $b$. The latter leads to inequalities \eqref{ineq2d}, \eqref{eqterm2}, which then can be integrated along a causal curve $\gamma$ (compare Formulae \eqref{eqgengoal} and \eqref{eqgoal}).

Let us now see how a suitable `linearised' version of \eqref{condcausal} can be obtained in higher dimensions.

For $\mu=0,\dots,n-1$, we define the following operators on $\H_\M$:
\[  
V^{\mu} \vc - \gamma^0 \widetilde \gamma^\mu,
\]
where $\gamma^0$ is the first flat gamma matrix of the spin structure and $\widetilde \gamma^\mu$ are the curved gamma matrices. We can notice that those operators are Hermitian (positive for $V^0$) and respect $(V^0)^2=-g^{00}$, $V^0V^\mu = V^\mu V^0$ and $V^\mu V^\nu + V^\nu V^\mu = 2 g^{\mu\nu}$ for $\mu, \nu > 0$. We also define 
\[
V \vc - \gamma^0  \gamma,
\]
where $\gamma$ is the chirality element, respecting $V^2 = 1$, commuting with $V^0$ and anti-commuting with $V^\mu$ for $\mu>0$.

Using such notation, the matrix involved in the constraint \eqref{constcaus} reads
\begin{equation}\label{matrixEN}
-\J [D,\mathbf{a} ] =  \left( \begin{matrix}
V^{\mu} a_{,\mu} & -iV m(a-b) \\
 iV m^*(a-b) & V^{\mu} b_{,\mu}  
\end{matrix} \right),
\end{equation}
with $\mathbf{a} = \left(
\begin{smallmatrix}
a & 0
\\
0 & b
\end{smallmatrix}
\right) \in \mathcal{A}$ Hermitian and using Einstein summation convention.

For an arbitrary future directed timelike curve $\gamma$ on $\M$ and a particular point $p=\gamma(s)$ on this curve, we define the quantities:
\[
v^{\mu} \vc \dot\gamma(s)^\mu,
\]
as the components of the vector tangent to the curve at $p$.

Let us suppose that one can find all the complex vectors $\psi \in \setC^{2^{ \lfloor\frac{n}{2}\rfloor}}$ respecting the following system of equations:
\begin{equation}\label{eqquantum}
\psi^* V^\mu (p) \psi = v^\mu,\quad \forall \mu=1,\dots,n-1.
\end{equation}
We remark that the system \eqref{eqquantum} can be interpreted from a quantum mechanical point of view by regarding that the quantities $v^\mu$ as the expectation values of the operators $V^\mu$ in the \emph{states} $\psi$. We also note that, the bigger $n$ is, the more underdetermined the system \eqref{eqquantum} is.

The system \eqref{eqquantum} can be rewritten using the vielbeins formalism and a flat version of the operators $V^\mu$:
\begin{equation}\label{eqquantumflat}
\psi^* V^\mu (p) \psi = \psi^* e^{\;\;\mu}_a V^a (p) \psi  = v^\mu \quad \lequi \quad \psi^* V^a (p) \psi = w^a, 
\end{equation}
where $V^a= - \gamma^0 \gamma^a$ and $w^a = e_{\;\;\mu}^a v^\mu$.

Now, let $\psi_1$ and $\psi_2$ be two solutions of system \eqref{eqquantum} and let $\chi$ be as in \eqref{chi}. Having fixed a point on $\M$, we compute the expectation value of the matrix $-\J[D,\mathbf{a}]$ in the state defined by the vector $\phi = (\sqrt{\chi} \psi_1, \sqrt{1-\chi} \psi_2)$, which should be non-negative for any $\mathbf{a} \in \mathcal{C}$. We obtain
\begin{align*}
-\phi^* \J[D,\mathbf{a}] \phi & = \chi \psi_1^* V^{\mu} \psi_1 a_{,\mu} - i \sqrt{\chi(1-\chi)} (a-b) m \psi_1^* V \psi_2 + \\
& \quad + (1-\chi) \psi_2^* V^{\mu} \psi_2 b_{,\mu} + i \sqrt{\chi(1-\chi)} (a-b) m^* \psi_2^* V \psi_1 \\
& = \chi v^{\mu} a_{,\mu} + (1-\chi) v^{\mu} b_{,\mu} - 2 \sqrt{\chi(1-\chi)} (a-b) \Re \{ i m \psi_1^* V \psi_2 \},
\end{align*}
which yields $\mathbf{a} = \left(
\begin{smallmatrix}
a & 0
\\
0 & b
\end{smallmatrix}
\right) \in \mathcal{C} \quad \Longleftrightarrow $
\begin{align}\label{ineq}
 \forall\,{p \in \M} \quad \chi v^{\mu} a_{,\mu} + (1-\chi) v^{\mu} b_{,\mu} \geq 2 \sqrt{\chi(1-\chi)} (a-b) \Re \{ i m \psi_1^* V \psi_2 \}.
\end{align}

Finally, one should use the degrees of freedom of $\psi_1$ and $\psi_2$ to maximise the RHS of the above inequality, as in Formula \eqref{ineq2d}.

Having outlined the general procedure we have to stress that both solving the system \eqref{eqquantum} and finding a configuration that maximises the inequality in \eqref{ineq} can be rather cumbersome for large dimensions. However, in the case of dimension 4 to which we shall now turn, we were able to carry on the computations.

The Dirac operator in dimension 4 reads
\begin{align*}
D = -i \widetilde\gamma^\mu\nabla^S_\mu \otimes 1 + i \gamma^0 \gamma^1 \gamma^2 \gamma^3 \otimes  \left( \begin{smallmatrix} 0 & m \\ m^* & 0 \end{smallmatrix} \right),
\end{align*}
with $\widetilde\gamma^\mu = e^{\;\;\mu}_a\gamma^{a}$ and some complex parameter $m$.

\begin{thm}\label{thm4d}
Let $(\mathcal{A},\widetilde{\mathcal{A}},\H,D,\mathcal{J})$ be a two-sheeted space-time of dimension 4. Two states $\omega_{p,\xi}, \omega_{q,\varphi} \in \N(\A)$ are causally
related with $\omega_{p,\xi} \preceq \omega_{q,\varphi}$ if and only if  $p \preceq q$ on $\M$ and
\[
l(\gamma)  \geq \frac{\abs{\arcsin\sqrt{\varphi}-\arcsin\sqrt{\xi}}}{\abs{m}},\]
where $l(\gamma)$ represents the length of a causal curve $\gamma$ going from $p$ to $q$ on $\M$.
\end{thm}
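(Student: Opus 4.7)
The plan is to replicate the three-step structure of the two-dimensional proof: first derive a sharp \emph{linearised} pointwise inequality encoding $\mathbf{a}\in\mathcal{C}$, then establish sufficiency by integrating this inequality along a timelike curve (mirroring Proposition \ref{prop2dimsc}), and finally establish necessity by exhibiting the explicit causal element of Proposition \ref{prop2dimnc} and checking that it still belongs to $\mathcal{C}$ in four dimensions. The abstract framework is already in place via the general procedure outlined just above the statement; the new content is solving system \eqref{eqquantum} in dimension four and extracting the sharp bound on $|m\,\psi_1^* V \psi_2|$.

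For this I would fix a chiral block representation of the flat Dirac matrices, say $\gamma^0=\left(\begin{smallmatrix}0 & iI \\ iI & 0\end{smallmatrix}\right)$ and $\gamma^a=\left(\begin{smallmatrix}0 & i\sigma^a \\ -i\sigma^a & 0\end{smallmatrix}\right)$, yielding $V^0=I$, $V^a=\diag(-\sigma^a,\sigma^a)$ and $V=\left(\begin{smallmatrix}0 & -iI \\ iI & 0\end{smallmatrix}\right)$. Working pointwise and boosting to the local rest frame where $w^\mu=(\sqrt{-g(v,v)},0,0,0)$, the system \eqref{eqquantum} for $\psi=(\phi_L,\phi_R)\in\setC^2\oplus\setC^2$ reduces to $|\phi_L|^2=|\phi_R|^2=\tfrac{1}{2}\sqrt{-g(v,v)}$ together with the Bloch-vector matching $\phi_L^*\sigma^a\phi_L=\phi_R^*\sigma^a\phi_R$. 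The solutions are therefore parameterised as $\phi_L=r\,u$, $\phi_R=r\,e^{i\beta}u$ with $u$ a unit two-spinor and $r=\sqrt{v^0/2}$, and a short computation gives $\psi_1^* V\psi_2 = i\tfrac{v^0}{2}(u^1)^*u^2(e^{-i\beta_1}-e^{i\beta_2})$, whence $|\psi_1^* V\psi_2|\leq\sqrt{-g(v,v)}$ with equality attainable and the phase of $m\,\psi_1^* V\psi_2$ freely tunable via $\beta_1,\beta_2$. Inserting this optimum into \eqref{ineq} yields the four-dimensional analogue of \eqref{ineq2d}: for every $\mathbf{a}\in\mathcal{C}$, every future-directed timelike $v$ and every $\chi\in[0,1]$,
\[
\chi\,v^\mu a_{,\mu}+(1-\chi)\,v^\mu b_{,\mu}\ \geq\ 2\sqrt{\chi(1-\chi)}\,|m|\,|a-b|\,\sqrt{-g(v,v)}.
\]

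Sufficiency is then a transcription of Proposition \ref{prop2dimsc}: given $p\preceq q$ and a timelike $\gamma$ of length $|\arcsin\sqrt\varphi-\arcsin\sqrt\xi|/|m|$, I define $\chi(t)$ as in \eqref{chi} and run the integration-by-parts computation \eqref{eqgoal}; the identities $\chi^\prime(t)=2\sigma\sqrt{\chi(1-\chi)}\,(l^m_\gamma)^\prime(t)$ and $(l^m_\gamma)^\prime(t)=|m|\sqrt{-g_{\gamma(t)}(\dot\gamma,\dot\gamma)}$ make the derivative-of-$\chi$ term cancel exactly against the pointwise bound above applied with $v=\dot\gamma(t)$, yielding non-negativity of \eqref{eqgengoal}. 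The extensions to longer curves, null curves and $\varphi=\xi$ follow from transitivity together with the observation that $a$ and $b$ must themselves be causal functions on $\M$ (apply the linearised inequality with $\chi\in\{0,1\}$), exactly as at the end of the proof of Theorem \ref{thm2d}.

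For necessity I would reuse the explicit candidate \eqref{defab}, smoothly extended off $\gamma$; the algebraic argument leading to \eqref{goalstrict} transfers verbatim. I expect the main obstacle to be verifying that this $\mathbf{a}$ actually lies in $\mathcal{C}$ in four dimensions, i.e.\ that the $8\times 8$ matrix in \eqref{matrixEN} is pointwise positive semi-definite. My approach would be a Schur-complement decomposition of the $2\times 2$ block structure in \eqref{matrixEN}: along $\gamma$ the gradients of $a$ and $b$ are by construction future-directed and collinear with $\dot\gamma$, so the diagonal blocks $V^\mu a_{,\mu}$ and $V^\mu b_{,\mu}$ are non-negative, and the Schur-complement inequality reduces, via the trigonometric identity $(\chi^\prime)^2=4\chi(1-\chi)((l^m_\gamma)^\prime)^2$, to the equality case of the linearised inequality derived above. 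Away from $\gamma$ the smooth extension must be arranged (for example by a transverse cut-off in adapted normal coordinates) so that transverse derivatives do not spoil positivity; this is the four-dimensional generalisation of the device already used in Appendix~A in dimension two, and it represents the technical bulk of the argument.
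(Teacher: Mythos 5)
Your proposal is correct and follows the same three-step skeleton as the paper's proof (pointwise linearised inequality, integration along a timelike curve for sufficiency, the explicit $\cot/\tan$ element for necessity), but it executes the two key technical steps differently. For the linearised inequality, the paper solves the system \eqref{eqquantumflat} explicitly in the Weyl representation, parametrising $\psi=(r_i e^{i\theta_i})$ via the angles $\beta_1,\beta_2,\theta$ and maximising over the residual freedoms $\alpha,\delta$ to reach \eqref{4dineq}; you instead exploit Lorentz covariance of $\psi^*V^a\psi$ (and the invariance of $\psi_1^*V\psi_2$) to work in the local rest frame, where the solutions are transparently $\phi_R=e^{i\beta}\phi_L$, and you recover the same sharp bound $\abs{\psi_1^*V\psi_2}\le\sqrt{-g(v,v)}$ with tunable phase --- a cleaner, covariance-based route to the identical inequality. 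For necessity, the paper checks that the candidate \eqref{defab4d}, extended with gradient collinear with $\dot\gamma$, is a causal element by computing the characteristic-polynomial coefficients of the $8\times8$ matrix \eqref{detailmatrix} (Appendix B, via Mathematica); your Schur-complement alternative genuinely works and is arguably neater: with collinear timelike gradients the diagonal blocks are positive definite, $V$ anticommutes with the spatial $V^a$, and the Schur complement vanishes identically because $\bigl(\tfrac{1}{2}\abs{m}\csc^{2}\theta\bigr)\bigl(\tfrac{1}{2}\abs{m}\sec^{2}\theta\bigr)=\abs{m}^{2}\csc^{2}2\theta=\abs{m}^{2}(a-b)^{2}$ --- note that this trigonometric identity, and not $(\chi^\prime)^2=4\chi(1-\chi)\bigl((l^m_\gamma)^\prime\bigr)^2$ which you invoke (the function $\chi$ plays no role in membership in $\mathcal{C}$), is the relevant cancellation. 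Your concern about transverse derivatives off $\gamma$ is fair but puts you on the same footing as the paper, which likewise prescribes the gradient only along the curve and appeals to an ``implicit smooth extension''; finally, for the ``only if'' direction you should, as the paper does, invoke Proposition \ref{proppq} to secure $p\preceq q$ and the continuity argument at the end of Section \ref{sec2dim} for null curves, both of which are dimension-independent and directly available.
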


\vspace*{0.1cm}

\begin{proof}
As in Section \ref{sec2dim}, the proof will be done for timelike relations, and extended to causal relations by continuity. Proposition \ref{proppq} can also be used to guarantee the usual causal relation $p \preceq q$.

The explicit formulas for the operators $V^a$ present in equations \eqref{eqquantumflat} are given in the Appendix B. Using the notation $\psi = (r_i e^{i \theta_i})_{i=1,2,3,4} \in \setC^4$, the system \eqref{eqquantumflat} becomes:
\begin{eqnarray}
r_1^2 + r_2^2 + r_3^2 + r_4^2 &=& w^0 \label{sysv0},\\
2 r_1 r_2 \cos(\theta_2- \theta_1) - 2 r_3 r_4 \cos(\theta_4- \theta_3) &=& -w^1 \label{sysv1},\\
2 r_1 r_2 \sin(\theta_2- \theta_1) - 2 r_3 r_4 \sin(\theta_4- \theta_3) &=& -w^2 \label{sysv2},\\
r_1^2 - r_2^2 - r_3^2 + r_4^2 &=& -w^3. \label{sysv3}
\end{eqnarray}

The equations \eqref{sysv0} and \eqref{sysv3} can be solved by setting:
\[
r_1 = \sqrt{\frac{w^0-w^3}{2}}\sin{\beta_1},\quad r_2 = \sqrt{\frac{w^0+w^3}{2}}\sin{\beta_2},
\]
\[
 r_3 = \sqrt{\frac{w^0+w^3}{2}}\cos{\beta_2},\quad r_4 = \sqrt{\frac{w^0-w^3}{2}}\cos{\beta_1},
\]
and if we fix $\theta = \theta_2 - \theta_1 = \theta_4 - \theta_3$, we get from \eqref{sysv1} and \eqref{sysv2}:
\[
\sqrt{(w^0)^2-(w^3)^2} \cos\theta \cos(\beta_1+\beta_2) = w^1,\quad \sqrt{(w^0)^2-(w^3)^2} \sin\theta \cos(\beta_1+\beta_2) = w^2.
\]
The latter is solved by:
\begin{equation}\label{cstbeta}
\tan\theta = \frac{w^2}{w^1},\quad \cos(\beta_1+\beta_2) = \sqrt{\frac{(w^1)^2+(w^2)^2}{(w^0)^2-(w^3)^2}},
\end{equation}
using $\theta = \frac\pi2$ if $w^1=0$ (and we suppose here that the curve is timelike).

So the solution of this system contains several degrees of freedom as the choice of $\beta_1$,  a global phase change $\theta_i \rightarrow \theta_i + \delta$ as well as the differences $ \theta_3 - \theta_1 = \theta_4 - \theta_2 = \alpha$. In particular, if $\psi$ is a solution to the system \eqref{eqquantumflat}, then $\psi e^{i \delta}$ is so for any $\delta \in \sR$.

With $\psi$ a solution to the system \eqref{eqquantumflat}, we define the following vector in $\setC^8$:
\begin{eqnarray*}
\phi &=& (\sqrt{\chi}\;\psi,\sqrt{1-\chi}\;\psi e^{i\delta})\\
 &=& (\sqrt{\chi}\,r_1 e^{i \theta_1},\dots,\sqrt{\chi}\,r_4 e^{i \theta_4},\sqrt{1-\chi}\,r_1 e^{i (\theta_1+\delta)},\dots,\sqrt{1-\chi}\,r_4 e^{i (\theta_4+\delta)}).
\end{eqnarray*}

The condition that the matrix (also represented in a detailed form in \eqref{detailmatrix}) 
\begin{equation}\label{matrixwithflat}
-\J [D,\mathbf{a} ] =  \left( \begin{matrix}
V^{\mu} a_{,\mu} & -iV m(a-b) \\
 iV m^*(a-b) & V^{\mu} b_{,\mu}  
\end{matrix} \right) =   \left( \begin{matrix}
V^{a} a_{,a} & -iV m(a-b) \\
 iV m^*(a-b) & V^{a} b_{,a}  
\end{matrix} \right)
\end{equation}
is pointwise positive semidefinite implies the following inequality when evaluated on $\phi$:
\begin{eqnarray}
&&\chi \cdot (a\circ\gamma)^\prime  + (1-\chi) \cdot (b\circ\gamma)^\prime \nonumber\\
&=& \chi \cdot ( w^a a_{,a}) + (1-\chi)  \cdot ( w^a b_{,a})  \nonumber\\
&=&   (\sqrt{\chi}\;\psi)^* V^a \;a_{,a} (\sqrt{\chi}\;\psi)  +  (\sqrt{1-\chi}\;\psi e^{i\delta})^* V^a \;b_{,a}  (\sqrt{1-\chi}\;\psi e^{i\delta})  \nonumber\\
&\geq& \sqrt{\chi(1-\chi)} 2\Re\left\{ m(a-b) \psi^* iV \psi e^{i\delta}\right\}\nonumber\\
&=& \sqrt{\chi(1-\chi)} 2\Re\left\{ m(a-b) \sqrt{(w^0)^2-(w^3)^2} \ i e^{i\delta} \right. \nonumber\\
&&\qquad \left.  \left[ \sin\beta_1 \cos\beta_2\frac{e^{i(\theta_3 - \theta_1)}-e^{i(\theta_1 - \theta_3)}}{2i} + \cos\beta_1 \sin\beta_2\frac{e^{i(\theta_4 - \theta_2)}-e^{i(\theta_2 - \theta_4)}}{2i} \right] \right\}\nonumber\\
&=& 2\sqrt{\chi(1-\chi)} \abs{m} \abs{a-b} \sqrt{(w^0)^2-(w^3)^2} \   \sin(\beta_1+\beta_2) \sin{\alpha} \nonumber\\
\nonumber\\
&=& 2\sqrt{\chi(1-\chi)} \abs{m} \abs{a-b} \sqrt{(w^0)^2-(w^1)^2-(w^2)^2-(w^3)^2}. \label{4dineq}
\end{eqnarray}
The last but one equality is obtained using the degree of freedom $\delta$ to maximise the real part and $\alpha = \theta_3 - \theta_1 = \theta_4 - \theta_2 $. The last line follows from \eqref{cstbeta} via $\sin(\beta_1+\beta_2) = \sin\left(\arccos\sqrt{\frac{(w^1)^2+(w^2)^2}{(w^0)^2-(w^3)^2}}\right)= \sqrt{\frac{(w^0)^2-(w^1)^2-(w^2)^2-(w^3)^2}{(w^0)^2-(w^3)^2}}$ and fixing the free parameter $\sin{\alpha} = 1$.

We are now ready to prove the sufficient condition of the theorem. Let us take a~timelike curve $\gamma: [0,T] \to \M$, with $\gamma(0)=p$, $\gamma(T)=q$, $T>0$ and 
\[\sigma\,l^m_\gamma(T) = \arcsin\sqrt{\varphi}-\arcsin\sqrt{\xi},\]
with 
\begin{eqnarray*}
 l^m_\gamma(t) &=&  \int_0^t \abs{m} \sqrt{-g_{\gamma(s)}(\dot \gamma(s),\dot \gamma(s))}  ds \\
 &=& \int_0^t \abs{m} \sqrt{-g_{\mu\nu} v^\mu v^\nu}  ds =  \int_0^t \abs{m} \sqrt{-\eta_{ab} w^a w^b}  ds \\
  &=& \int_0^t \abs{m} \sqrt{(w^0)^2-(w^1)^2-(w^2)^2-(w^3)^2}  ds 
\end{eqnarray*}
and $\sigma = \sgn\left(\arcsin\sqrt{\varphi}-\arcsin\sqrt{\xi}\right) \in \set{+1,-1}$. The general inequality case is treated by transitivity of the causal order as in the proof of Proposition \ref{prop2dimsc}.

Let us take an arbitrary $\mathbf{a} = \left(\begin{smallmatrix} a & 0\\0 & b\end{smallmatrix}\right) \in \mathcal{C}$. We define the function 
$$\chi(t) = \sin^2\left( \sigma\,l^m_\gamma(t) + \arcsin \sqrt{\xi}  \right)$$
 for $t\in[0,T]$.  In the same way as in the proof of Prosition \ref{prop2dimsc}, we need to show, using the second fundamental theorem of calculus, that the following expression is always non-negative:
\begin{eqnarray*}
&&\varphi\,a(q) - \xi\, a(p) + (1-\varphi)\, b(q) - (1-\xi)\, b(p) \\
&=& \int_0^T  \chi^\prime (t)\, (a-b)(\gamma(t)) \;dt + \int_0^T \left[ \chi(t)\,(a\circ\gamma)^\prime(t)  + (1-\chi(t))\,(b\circ\gamma)^\prime(t) \right] \;dt. 
\end{eqnarray*}

Once more we have that:
\[
\chi^\prime \cdot (a-b) = \sigma \;2 \sqrt{\chi(1-\chi)} \,\left(l^m_\gamma\right)^\prime (a-b),
\]
and from the inequality \eqref{4dineq}, we conclude that:
\[
\chi \cdot (a\circ\gamma)^\prime  + (1-\chi) \cdot (b\circ\gamma)^\prime \geq 2 \sqrt{\chi(1-\chi)}  \left(l^m_\gamma\right)^\prime \abs{a-b},
\]
which completes the sufficient condition.

The proof of the necessary condition is an exact generalisation of the proof of Proposition \ref{prop2dimnc} to a curve $\gamma$ on the 4-dimensional space $\M$ and we will not repeat the complete proof here. The only difference is in the checking that the particular element $\mathbf{a} = \left(\begin{smallmatrix}a & 0\\0 & b\end{smallmatrix}\right)$ defined along the curve $\gamma$ (and considering its implicit extension to whole $\M$) by 
\begin{eqnarray}
a(\gamma(t)) &=&  - \frac 12 \cot( l^m_\gamma(t) + \sigma \arcsin\sqrt{\xi} + \epsilon), \nonumber \\
b(\gamma(t)) &=&   \frac 12 \tan( l^m_\gamma(t) + \sigma \arcsin\sqrt{\xi} + \epsilon), \label{defab4d}
\end{eqnarray}  
respects the conditions of a causal element. By the argument outlined on page \pageref{matrixEN} it is sufficient to check the positive semidefiniteness of the matrix \eqref{matrixwithflat} on every point $p$ of the curve $\gamma$. To simplify the notation and avoid a confusion between the curved ($^\mu$) and flat ($^a$) indices, we will work with a local trivialisation on a open set $U_p$ using the local flat coordinates $ds^2 \vert_{p} = \eta_{ab} dx^a dx^b$. This technical computation is presented in the Appendix B.
\end{proof}

\vspace*{0.3cm}

\section{The impact of fluctuations}\label{secscalar}

The models presented in Sections \ref{sec2dim} and \ref{sec4dim} use a standard definition of the Dirac operator for an almost commutative space, with a minimal coupling between the finite part and the manifold. However, the axioms of noncommutative geometry allow us to construct more general Dirac operators by taking into account the inner fluctuations \cite[Section 10.8]{MC08}.

Since the spectral triple at hand does not admit a reality structure, the most general Dirac operator for a two-sheeted space-time reads
\begin{align*}
D_A = D + A, && A = \sum_{\text{finite}} \mathbf{a}_i [D,\mathbf{b}_i],
\end{align*}
with $\mathbf{a}_i, \mathbf{b}_i \in \A$ and such that $i A$ is Krein-self-adjoint. As usually in the almost commutative case, the fluctuation term $A$ splits into two parts (see for instance \cite[Section 2.5.1]{Dungen}) and we have
\begin{align}\label{D_fluctuated}
D_A = D_{\M} \ox 1 + \widetilde{\gamma}^{\mu} \ox \left(\begin{smallmatrix} A_{\mu} & 0 \\ 0 & B_{\mu} \end{smallmatrix} \right) + \gamma_{\M} \ox \left( \begin{smallmatrix} 0 & \Phi_{ } \\ \Phi^*_{ } & 0 \end{smallmatrix} \right),
\end{align}
with $A_{\mu}, B_{\mu}, \Phi \in \widetilde\A_\M$. We shall call the second and the third term of \eqref{D_fluctuated} the vector and scalar fluctuations respectively.

We now investigate the impact of the fluctuations on the causal structure of a two-sheeted space-time. We adopt the name of a \emph{fluctuated two-sheeted space-time} for the Lorentzian spectral triple from Definition \ref{def:twosheeted} with $D$ exchanged for $D_A$. The results, which hold both in dimension 2 and 4, are summarised in the following theorem.

\begin{thm}\label{thm:fluct}
Let $\preceq$ denote the partial order determined by the causal cone associated with a fluctuated Dirac operator on a two-sheeted space-time. Two states $\omega_{p,\xi}, \omega_{q,\varphi} \in \N(\A)$ are causally related with $\omega_{p,\xi} \preceq \omega_{q,\varphi}$ if and only if  $p \preceq q$ on $\M$ and
\begin{align}\label{const:scalar}
\int_0^t \abs{\Phi(\gamma(s))} \sqrt{-g_{\gamma(s)}(\dot \gamma(s),\dot \gamma(s))}  \;ds   \geq \abs{\arcsin\sqrt{\varphi}-\arcsin\sqrt{\xi}}.
\end{align}
\end{thm}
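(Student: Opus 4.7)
The plan is to reduce Theorem \ref{thm:fluct} to Theorems \ref{thm2d} and \ref{thm4d} by observing that the fluctuations affect the commutator $[D_A,\mathbf{a}]$ with diagonal elements $\mathbf{a}\in\wA$ in a particularly mild way. First I would compute $[D_A,\mathbf{a}]$ for $\mathbf{a}=\left(\begin{smallmatrix} a & 0\\0 & b\end{smallmatrix}\right)$. The vector fluctuation piece $\widetilde\gamma^\mu\otimes\diag(A_\mu,B_\mu)$ is diagonal in the finite index, so it multiplies $\mathbf{a}$ block-wise; inside each block it is of the form $\widetilde\gamma^\mu A_\mu$ with $A_\mu\in\wA_\M$ a scalar function, and since both $A_\mu$ and $a$ are (commuting) multiplication operators on $\M$ the commutator $[\widetilde\gamma^\mu A_\mu,a]$ vanishes identically. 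Hence vector fluctuations drop out. The scalar fluctuation piece $\gamma_\M\otimes\left(\begin{smallmatrix} 0 & \Phi\\ \Phi^* & 0\end{smallmatrix}\right)$, being off-diagonal in the finite index, produces a contribution of exactly the same shape as the mass term, but with the constant $m$ replaced by the field $\Phi(x)$.

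Consequently, $-\mathcal{J}[D_A,\mathbf{a}]$ has formally the same matrix structure as in Lemma \ref{propC} (and its 4-dimensional analogue used in Theorem \ref{thm4d}), with $m$ replaced pointwise by $\Phi(x)$. The characterisation of causal elements is therefore identical, except that the coupling between the sheets is now position dependent. In particular, the pointwise inequality \eqref{ineq} (and its 2-dimensional counterpart \eqref{ineq2d}) becomes
\begin{align*}
\chi\, v^\mu a_{,\mu}+(1-\chi)\, v^\mu b_{,\mu}\ \geq\ 2\sqrt{\chi(1-\chi)}\,|a-b|\,|\Phi(x)|\,\sqrt{-g(v,v)}
\end{align*}
at every point $x$ of the curve, with the RHS obtained by the same optimisation over auxiliary spinor data as in Sections \ref{sec2dim} and \ref{sec4dim}.

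The natural replacement for $l^m_\gamma$ is then
\begin{align*}
l^\Phi_\gamma(t)\ \vc\ \int_0^t |\Phi(\gamma(s))|\,\sqrt{-g_{\gamma(s)}(\dot\gamma(s),\dot\gamma(s))}\,ds,
\end{align*}
and one checks the crucial identity $\bigl(l^\Phi_\gamma\bigr)'(t)=|\Phi(\gamma(t))|\sqrt{-g(\dot\gamma,\dot\gamma)}$, which is exactly what is needed in the integration-by-parts step to match the derivative of $\chi(t)=\sin^2\bigl(\sigma\,l^\Phi_\gamma(t)+\arcsin\sqrt{\xi}\bigr)$. With this replacement the sufficient direction is a line-by-line transcription of Proposition \ref{prop2dimsc} (for $\dim\M=2$) and of the sufficient part of Theorem \ref{thm4d} (for $\dim\M=4$): one evaluates $\varphi a(q)-\xi a(p)+(1-\varphi)b(q)-(1-\xi)b(p)$ along $\gamma$ by the fundamental theorem of calculus and the two integrals cancel as before.

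For the necessary direction I would adapt the construction of Proposition \ref{prop2dimnc} and use the test element
\begin{align*}
a(\gamma(t))=-\tfrac12\cot\bigl(l^\Phi_\gamma(t)+\sigma\arcsin\sqrt{\xi}+\epsilon\bigr),\qquad b(\gamma(t))=\tfrac12\tan\bigl(l^\Phi_\gamma(t)+\sigma\arcsin\sqrt{\xi}+\epsilon\bigr),
\end{align*}
with an implicit smooth extension to $\M$. Since the causal-element condition is pointwise and only sees $\Phi$ at the evaluation point, the verification reduces at each point to the computation already performed in Appendix A (resp.\ Appendix B) with $m$ replaced by $\Phi(x)$. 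The main technical point, and the place where the argument genuinely differs from the constant-mass case, is the behaviour of the test element where $\Phi$ vanishes: there $l^\Phi_\gamma$ is locally constant and the cotangent/tangent argument can be kept inside the required open interval without trouble, but one must handle the pure-state cases $\xi\in\{0,1\}$ or $\varphi\in\{0,1\}$ by the same $\epsilon>0$ trick. Finally, the extension from timelike to merely causal curves, and the impossibility of non-trivial internal motion along null segments, are obtained by continuity and transitivity of $\preceq$ exactly as at the end of Section \ref{sec2dim}.
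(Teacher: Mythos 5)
Your proposal is correct and follows essentially the same route as the paper: vector fluctuations drop out of $\mathcal{J}[D_A,\mathbf{a}]$ because the algebra is commutative, and the scalar fluctuation amounts to the pointwise replacement $m\to\Phi(x)$, so the result follows by substituting $l^\Phi_\gamma$ for $l^m_\gamma$ in Propositions \ref{prop2dimsc}, \ref{prop2dimnc} and Theorem \ref{thm4d}. Your write-up is in fact more detailed than the paper's two-paragraph argument (e.g.\ the discussion of points where $\Phi$ vanishes), but it is the same proof.
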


\vspace*{0.3cm}

\begin{proof}
Let us first note that vector fluctuations have no impact on the causal cone. Indeed, since our algebra $\widetilde\A$ is commutative, the term $A^V \vc \widetilde\gamma^{\mu} \ox \left( \begin{smallmatrix} A_{\mu} & 0 \\ 0 & B_{\mu} \end{smallmatrix} \right) $ will commutate with every element $\mathbf{a} \in \widetilde\A$ and thus
\begin{align*}
\J [D+A^V,\mathbf{a}] = \J [D,\mathbf{a}].
\end{align*}

On the other hand, the scalar fluctuation will affect the causal structure. The formula \eqref{const:scalar} is obtained simply by replacing the function $l^m_\gamma$ (see \eqref{lm}) by 
\[l^\Phi_\gamma(t) \vc  \int_0^t \abs{\Phi(\gamma(s))} \sqrt{-g_{\gamma(s)}(\dot \gamma(s),\dot \gamma(s))} \,  ds \]
in Proposition \ref{prop2dimsc}, Proposition \ref{prop2dimnc} and Theorem \ref{thm4d}.
\end{proof}

\vspace*{0.3cm}

It is an interesting phenomenon that the scalar fluctuation, which yields the Higgs field in the full noncommutative Standard Model, affects the causal structure in the same way as a conformal transformation of the space-time metric with $\Omega = \abs{\Phi}^{-1}$. The interplay between the Higgs field and conformal transformations was discussed in \cite{Dungen}, though at the level of spectral action.

Let us stress however that the field $\Phi$ is external from the viewpoint of the space-time metric.  It means that the causal relation becomes dependent on the choice of the path between the two space-time points $p$ and $q$ since maximal geodesics do not give automatically the best constraint, as it was the case for a constant scalar field.

\vspace*{0.3cm}

\section{Conclusions}\label{sec:conc}

In this paper, we studied the causal structure of a particular Lorentzian almost commutative model called \emph{two-sheeted space-time}. The computation, summarised in Theorem \ref{thm_MAIN}, was done explicitly in the case of dimensions 2 and 4. Surprisingly enough, it turned out that causal relations between the two disjoint sheets are possible, under the condition that the amount of proper time (i.e.~the length of a causal curve) is sufficiently large. We also showed that fluctuations of the Dirac operator result in a conformal-like weighting of the proper time necessary for a causal evolution from one sheet to another.

We note that the procedure highlighted at the beginning of Section \ref{sec4dim} can be applied to our previous model presented in \cite{SIGMA2014} and based on the algebra $M_2(\setC)$. This would allow to extend the results, proved in \cite{SIGMA2014} only for a 2-dimensional Minkowski space-time, to general even dimensional globally hyperbolic space-times. Moreover, the same arguments as in the proof of Theorem \ref{thm:fluct} could be applied to include scalar fluctuations in the Dirac operator. On the other hand, since the algebra $\A_F = M_2(\setC)$ is not commutative, the vector fluctuations will, in general, affect the causal structure of this almost commutative space-time.

Our results show that the peculiar causal properties of almost commutative space-times are universal, not being an artefact of 2-dimensionality or flatness of the underlying space-time. However, the complete computation for an arbitrary $n$-dimensional model is difficult for the following two reasons. Firstly, the system of equations \eqref{eqquantum} (which can be seen as a quantum problem) becomes cumbersome for a large number of degrees of freedom, despite being highly underdetermined. Secondly, checking if the matrix \eqref{matrixEN} is positive semidefinite for an arbitrary dimension is directly related to an old problem --- the \emph{Matrix Vieta Theorem} \cite{fuchs1995matrix,connes1997matrix} --- which still remains unsolved, except in a perturbative way \cite{schwarz2000noncommutative}. A solution to these two mathematical problems, which are in fact independent from our work, would open the possibility to compute causal structures of several different models of arbitrary dimensions.

\vspace*{0.3cm}

\section*{Acknowledgements}

NF was partially supported within a grant from the John Templeton Foundation. ME and NF were supported within the project NCN PRELUDIUM 2013/09/N/ST1/01108. ME acknowledges the support of the Marian Smoluchowski Krak\'ow Research Consortium ``Matter--Energy--Future'' within the programme KNOW.

The authors would like to thank the Hausdorff Institute in Bonn for hospitality during the Trimester Program ``Non-commutative Geometry and its Applications'', in the course of which a part of this research was carried out.

\vspace*{0.3cm}

\appendix
\section*{Appendix A}
\setcounter{section}{1}

In this Appendix, we show that the element $\mathbf{a}$ defined in the proof of Proposition \ref{prop2dimnc} (with $a$ and $b$ defined in \eqref{defab}) is a causal element along the curve $\gamma$, i.e.~that the matrix given in Lemma \ref{propC}$(b)$ is positive semidefinite.

The partial derivatives of $a$ are defined in such a way that its directional derivative is maximal along $\gamma$ and corresponds to
\[
\dv{}{t}(a\circ\gamma)= -\frac 12 \dv{}{t} \cot \theta = \abs{m}\sqrt{\lambda_1\lambda_2} \csc^2 \theta,
\]
where $\theta=l^m_\gamma(t) + \sigma \arcsin\sqrt{\xi} + \epsilon$ and using the functions \eqref{defl1l2} with the relation \eqref{rell1l2}. Then, we have the following values for the partial derivatives:
\begin{itemize}
\item $\Omega (a_{,0}+a_{,1} ) =\frac12\sqrt{\frac{\lambda_2}{\lambda_1}}\abs{m} \csc^2 \theta$,
\item $\Omega (a_{,0}-a_{,1} ) =\frac12\sqrt{\frac{\lambda_1}{\lambda_2}}\abs{m} \csc^2 \theta$,
\end{itemize}
since $\dv{}{t}(a\circ\gamma) =  \lambda_1 \Omega (a_{,0}+a_{,1}) + \lambda_2 \Omega (a_{,0}-a_{,1}).$

A similar reasoning with $b$ gives 
\[
\dv{}{t}(b\circ\gamma)= \frac 12 \dv{}{t} \tan \theta = \abs{m}\sqrt{\lambda_1\lambda_2} \sec^2 \theta
\]
and
\begin{itemize}
\item $\Omega(b_{,0}+b_{,1}) =\frac12\sqrt{\frac{\lambda_2}{\lambda_1}}\abs{m} \sec^2 \theta$,
\item $\Omega(b_{,0}-b_{,1}) =\frac12\sqrt{\frac{\lambda_1}{\lambda_2}}\abs{m} \sec^2 \theta$.\\
\end{itemize}
Since
\[
a-b = - \frac{1}{2 \sin\theta\cos\theta }=  -\csc 2\theta,
\]
\vspace{0.2cm}

we must prove that the following matrix is positive semidefinite:
\vspace{0.2cm}
\[
A=\left(
\begin{matrix}
\frac12\sqrt{\frac{\lambda_2}{\lambda_1}}\abs{m} \csc^2 \theta & 0 & 0 & m \csc 2\theta \\
0 & \frac12\sqrt{\frac{\lambda_1}{\lambda_2}}\abs{m} \csc^2 \theta & -m \csc 2\theta & 0 \\
0 & - m^* \csc 2\theta & \frac12\sqrt{\frac{\lambda_2}{\lambda_1}}\abs{m} \sec^2 \theta & 0 \\
m^* \csc 2\theta & 0 & 0 & \frac12\sqrt{\frac{\lambda_1}{\lambda_2}}\abs{m} \sec^2 \theta
\end{matrix}
\right).\\
\]
\vspace{0.2cm}

Since the eigenvalues of $A$ are the roots of the characteristic polynomial:
\[
\det(A-\lambda1)=\lambda^4-c_1\lambda^3+c_2\lambda^2-c_3\lambda+c_4,
\]
from Vieta's formulas it is sufficient to check that $c_k \geq 0$ for $k=1,\dots,4$.\\

An explicit computation (using e.g.~Mathematica) gives the following:
\begin{eqnarray*}
c_1&=&\tr A=2\abs{m} \left(\sqrt{\frac{\lambda_2}{\lambda_1}} + \sqrt{\frac{\lambda_1}{\lambda_2}} \right) \csc^2 2\theta  \geq 0,
\\
c_2&=&\frac12\prt{\prt{\tr A}^2-\tr A^2}=\frac{\abs{m}^2 \left[ (\lambda_1-\lambda_2)^2 + 4 \lambda_1\lambda_2 \csc^2 2\theta\right] \csc^2 2\theta}{\lambda_1\lambda_2} \geq 0,
\\
c_3&=&\frac16\prt{\prt{\tr A}^3-3\tr A^2\tr A+2\tr A^3}=0,
\\
c_4&=&\det(A)=0.
\end{eqnarray*}
\vspace{0.2cm}

Hence the matrix $a$ is positive semidefinite and respects the conditions of Lemma \ref{propC}.

\vspace*{0.5cm}

\section*{Appendix B}
\setcounter{section}{2}

In this Appendix, we regroup the technical definitions and computations of the proof of Theorem \ref{thm4d} in Section \ref{sec4dim}.\\

The representation of the flat gamma matrices is chosen to be the following one (Weyl representation):

\[
\gamma^0=
\left( \begin{array}{cccc}
0 & 0 & i & 0 \\ 
0 & 0 & 0 & i \\ 
i & 0 & 0 & 0 \\ 
0 & i & 0 & 0 
\end{array} \right),
\qquad
\gamma^1=
\left( \begin{array}{cccc}
0 & 0 & 0 & i \\ 
0 & 0 & i & 0 \\ 
0 & -i & 0 & 0 \\ 
-i & 0 & 0 & 0 
\end{array} \right),
\]
\[
\gamma^2=
\left( \begin{array}{cccc}
0 & 0 & 0 & 1 \\ 
0 & 0 & -1 & 0 \\ 
0 & -1 & 0 & 0 \\ 
1 & 0 & 0 & 0 
\end{array} \right),
\qquad
\gamma^3=
\left( \begin{array}{cccc}
0 & 0 & i & 0 \\ 
0 & 0 & 0 & -i \\ 
-i & 0 & 0 & 0 \\ 
0 & i & 0 & 0 
\end{array} \right). \]

\vspace{0.3cm}

In this representation, the (flat) operators present in system \eqref{eqquantumflat} are explicitly given by:
\[
V^0 = -\gamma^0\gamma^0=
\left( \begin{array}{cccc}
1 & 0 & 0 & 0 \\ 
0 & 1 & 0 & 0 \\ 
0 & 0 & 1 & 0 \\ 
0 & 0 & 0 & 1 
\end{array} \right),
\qquad
V^1 = -\gamma^0\gamma^1=
\left( \begin{array}{cccc}
0 & -1 & 0 & 0 \\ 
-1 & 0 & 0 & 0 \\ 
0 & 0 & 0 & 1 \\ 
0 & 0 & 1 & 0 
\end{array} \right),
\]
\[
V^2 = -\gamma^0\gamma^2=
\left( \begin{array}{cccc}
0 & i & 0 & 0 \\ 
-i & 0 & 0 & 0 \\ 
0 & 0 & 0 & -i \\ 
0 & 0 & i & 0 
\end{array} \right),
\qquad
V^3 = -\gamma^0\gamma^3=
\left( \begin{array}{cccc}
-1 & 0 & 0 & 0 \\ 
0 & 1 & 0 & 0 \\ 
0 & 0 & 1 & 0 \\ 
0 & 0 & 0 & -1 
\end{array} \right), 
\]
\[
iV = -\gamma^1\gamma^2\gamma^3=
\left( \begin{array}{cccc}
0 & 0 & 1 & 0 \\ 
0 & 0 & 0 & 1 \\ 
-1 & 0 & 0 & 0 \\ 
0 & -1 & 0 & 0 
\end{array} \right).
\]

\vspace{0.3cm}

The matrix \eqref{matrixwithflat} is given explicitly in locally flat coordinates around a point $p$ (i.e.~$ds^2_{|p} = \eta_{ab} dx^a dx^b$) by: 
\begin{equation}\label{detailmatrix}
A = -\J[D,\mathbf{a}] =
\end{equation}
\[
 \left(
{
\begin{smallmatrix}
 a_{,0} - a_{,3} & -a_{,1} + i a_{,2}  & 0 &  0 &  0 &  0 &  -m(a-b) &  0
\\
-a_{,1} - i a_{,2}  &  a_{,0} + a_{,3} & 0 &  0 &  0 &  0 &  0 &  -m(a-b) 
\\
0 &  0 &  a_{,0} + a_{,3} & a_{,1} - i a_{,2}  &  m(a-b) &  0 &   0&  0 
\\
0 &  0 &  a_{,1} + i a_{,2} & a_{,0} - a_{,3}  &  0  &  m(a-b) &  0 &   0
\\
 0 &  0 &  m^*(a-b) &  0 & b_{,0} - b_{,3} & -b_{,1} + i b_{,2}  & 0 &  0 
\\
0 &  0 &  0 &  m^*(a-b)  & -b_{,1} - i b_{,2}  &  b_{,0} + b_{,3} & 0 &  0 
\\
 -m^*(a-b) &  0 &   0 &  0  & 0 &  0 &  b_{,0} + b_{,3} & b_{,1} - i b_{,2} 
\\
0  &  -m^*(a-b) &  0 &   0 & 0 &  0 &  b_{,1} + i b_{,2} & b_{,0} - b_{,3}  
\\
\end{smallmatrix}
}
\right)\\
\]

\vspace{0.5cm}

For the proof of the sufficient condition of Theorem \ref{thm4d}, we need to check that the coefficients of the characteristic polynomial of this matrix are everywhere non-negative with the functions $a$ and $b$ replaced by the ones defined in \eqref{defab4d}, which gives the following elements as entries of the matrix:
\begin{align*}
a_{,0} \pm a_{,3} &= (w^0 \pm w^3) \frac{  \abs{m}  \csc^2 \theta}{2 \sqrt{(w^0)^2 - (w^1)^2 - (w^2)^2 -(w^3)^2}},\\
 a_{,1} \pm i a_{,2} &= (w^1 \pm i w^2) \frac{  \abs{m}  \csc^2 \theta}{2 \sqrt{(w^0)^2 - (w^1)^2 - (w^2)^2 -(w^3)^2}}, \\
b_{,0} \pm b_{,3} &= (w^0 \pm w^3) \frac{  \abs{m}  \sec^2 \theta}{2 \sqrt{(w^0)^2 - (w^1)^2 - (w^2)^2 -(w^3)^2}}, \\
 b_{,1} \pm i b_{,2} &= (w^1 \pm i w^2) \frac{  \abs{m}  \sec^2 \theta}{2 \sqrt{(w^0)^2 - (w^1)^2 - (w^2)^2 -(w^3)^2}}, \\
m(a-b) &= -m\csc 2\theta,\\
 m^*(a-b) &= -m^*\csc 2\theta.
\end{align*}

An explicit computation (using e.g.~Mathematica) gives the following:
\begin{eqnarray*}
c_1&=&\mathsmaller{\frac{ 8  \abs{m}(w^0)\, \csc^2 2\theta}{\sqrt{(w^0)^2 - (w^1)^2 - (w^2)^2 -(w^3)^2}}}  \geq 0,
\\
c_2&=&\mathsmaller{\frac{  4 \abs{m}^2 (6 (w^0)^2  - (w^1)^2 - (w^2)^2 - (w^3)^2 - ((w^1)^2 + (w^2)^2 + (w^3)^2) \cos 4\theta )  \csc^4 2\theta}{(w^0)^2 - (w^1)^2 - (w^2)^2 -(w^3)^2} }  \geq 0,
\\
c_3&=&\mathsmaller{\frac{  16 \abs{m}^3 (2 (w^0)^2  - (w^1)^2 - (w^2)^2 - (w^3)^2 - ((w^1)^2 + (w^2)^2 + (w^3)^2) \cos 4\theta )  \csc^6 2\theta}{((w^0)^2 - (w^1)^2 - (w^2)^2 -(w^3)^2)^{\frac 32}}}  \geq 0,
\\
c_4&=&\mathsmaller{\frac{   \abs{m}^4 (2 (w^0)^2  - (w^1)^2 - (w^2)^2 - (w^3)^2 - ((w^1)^2 + (w^2)^2 + (w^3)^2) \cos 4\theta )^2  \csc^8 2\theta}{((w^0)^2 - (w^1)^2 - (w^2)^2 -(w^3)^2)^{\frac 32}}}  \geq 0,
\\
c_5&=& 0,\qquad c_6\ =\ 0,\qquad c_7\ = \ 0,\qquad c_8\ = \ 0,
\end{eqnarray*}
since $(w^0)^2 - (w^1)^2 - (w^2)^2 -(w^3)^2 = -g_{\mu\nu} v^{\mu}v^{\nu} \geq 0$.

\vspace{0.5cm}

\bibliographystyle{elsarticle-num}
\bibliography{causality_bib}{}

\end{document}